\documentclass[11pt]{article}
\usepackage{amsmath,amsfonts,amssymb,amsthm}
\usepackage{graphicx}
\usepackage{hyperref}
\usepackage[margin=1in]{geometry}

\newtheorem{theorem}{Theorem}
\newtheorem{proposition}[theorem]{Proposition}

\newtheorem{lemma}[theorem]{Lemma}
\newtheorem{definition}[theorem]{Definition}
\newtheorem{remark}[theorem]{Remark}

\DeclareMathOperator{\Tr}{Tr}

\newcommand{\R}{\mathbb{R}}
\newcommand{\C}{\mathbb{C}}
\newcommand{\Z}{\mathbb{Z}}
\newcommand{\HH}{\mathcal{H}}
\newcommand{\FF}{\mathcal{F}}
\newcommand{\RR}{\mathcal{R}}

\title{Continuous Algebraic Diversity: Unifying Spectral, Wavelet, and\\
Time-Frequency Analysis via Lie Group Actions}
\author{Mitchell A. Thornton\\
Richardson, TX 75080 USA\\
\texttt{mitchat@sbcglobal.net}}
\date{April 2026}

\begin{document}
\maketitle

\begin{abstract}
We provide a computable criterion for selecting among the classical spectral analysis methods (Fourier, wavelet, time-frequency) by extending the algebraic diversity (AD) framework to Lie groups acting on continuous waveforms in $L^2(\R)$.  To our knowledge, there is no other criterion that provides this selection capability. The central construction, the group-averaged estimator, generalizes from a finite sum over group elements to an integral over the group with respect to its Haar measure.  We prove a Continuous Replacement Theorem establishing that the continuous group-averaged estimator separates signal and noise subspaces under equivariance and ergodicity conditions analogous to the discrete case.  The main result is a Unification Theorem showing that three major branches of signal analysis are special cases of continuous AD with different Lie groups: classical spectral analysis (Wiener-Khinchin) corresponds to the translation group, wavelet analysis corresponds to the affine group, and time-frequency analysis (ambiguity function, Wigner-Ville distribution) corresponds to the Heisenberg-Weyl group.  The commutativity residual extends to the Hilbert-Schmidt norm on operator commutators, providing a continuous matched-group criterion that classifies stochastic processes by their algebraic symmetry.  A Discretization Recovery Theorem shows that all discrete AD results are sampling approximations to the continuous theory, with the DFT, DCT, and KLT arising as restrictions of continuous Lie group actions to finite sample grids.
\end{abstract}

\section{Introduction}

The algebraic diversity (AD) framework~\cite{thornton2026ad_arxiv} establishes that temporal averaging over multiple observations can be generalized to algebraic group action on a single observation for second-order statistical estimation.  The General Replacement Theorem identifies conditions under which a \emph{group-averaged estimator}
\begin{equation}\label{eq:dad}
\mathbf{F}_G(\mathbf{x}) = \frac{1}{|G|}\sum_{g \in G} [\rho(g)\mathbf{x}][\rho(g)\mathbf{x}]^H
\end{equation}
constructed from a single observation $\mathbf{x} \in \C^M$ recovers the signal-noise subspace decomposition.  The theory has been developed entirely in the discrete setting: finite groups, finite-dimensional vectors, finite sums. While the discrete AD framework is powerful in our view, it is incomplete.  It applies to sampled data on uniform grids, and the groups it employs are finite.  Many problems of practical importance fall outside its scope.  Self-similar signals such as $1/f$ noise and fractional Brownian motion have power-law spectra that no finite group can represent exactly.  Chirp waveforms sweep continuously through frequency, and discretizing them into pulses introduces boundary artifacts at every pulse edge.  Spherical data, from cosmic microwave background maps to gravitational field models, lives on $S^2$ and requires the infinite rotation group $\mathrm{SO}(3)$.  As a consequence of the continuous AD framework, we show that the Heisenberg uncertainty principle, which governs the resolution limits of every spectral method, is a result grounded in the fact that two Lie groups cannot simultaneously be optimally matched to the same signal. This paper closes the gap between finite groups and Lie groups, bringing the following classes of problems within reach:

\begin{itemize}
\item \textbf{Principled transform selection.}  Engineers routinely choose between Fourier, wavelet, and time-frequency analysis by convention or trial and error.  The continuous commutativity residual $\delta$ provides, for the first time, a computable criterion that selects the optimal transform for a given signal from first principles.

\item \textbf{Self-similar and fractal signal analysis.}  The affine group (translations plus dilations) is the natural matched group for processes with power-law spectra.  The continuous framework makes this precise and provides the admissibility conditions under which wavelet analysis is optimal.

\item \textbf{Waveform diversity without pulse boundaries.}  Continuous group averaging operates directly on waveforms, eliminating the arbitrary pulse segmentation that discrete eigentensors require and providing access to sub-pulse structure.

\item \textbf{Spherical harmonic analysis as AD.}  The angular power spectrum $C_\ell$ used in CMB cosmology, geodesy, and molecular orbital theory is the group-averaged estimator for $\mathrm{SO}(3)$, and the AD framework provides the same matched-group optimality analysis that was previously available only for finite groups.

\item \textbf{Resolution limits from group theory.}  The uncertainty principle $\Delta\omega\,\Delta t \geq 1/2$ emerges as a group commutation constraint: the translation and modulation generators do not commute, so no signal can be simultaneously optimal for both groups.
\end{itemize}

In the continuous AD framework, the observation is now a waveform $x \in L^2(\R)$, the group is a Lie group $G$ with Haar measure $\mu$, and the group-averaged estimator in the discrete AD framework becomes an integral:
\begin{equation}\label{eq:cad}
\FF_G(x) = \int_G [\rho(g)x] \otimes [\rho(g)x]^* \, d\mu(g),
\end{equation}
where $\rho: G \to U(\HH)$ is a unitary representation on the Hilbert space $\HH = L^2(\R)$ and $\otimes$ denotes the tensor product.

The motivation for developing the continuous AD framework is twofold.  First, many signals of engineering interest, including radar waveforms, acoustic emissions, and biomedical recordings, are continuous in nature, and discretization into pulses introduces arbitrary boundary effects.  Working directly in the continuum eliminates these artifacts and provides access to structure at scales finer than the sampling interval.  Second, the continuous formulation reveals a deep unification: the three major branches of signal analysis (spectral, wavelet, and time-frequency) are all instances of continuous AD with different Lie groups (Figure~\ref{fig:discrete-continuous}).

\begin{figure}[ht]
\centering
\includegraphics[width=\textwidth]{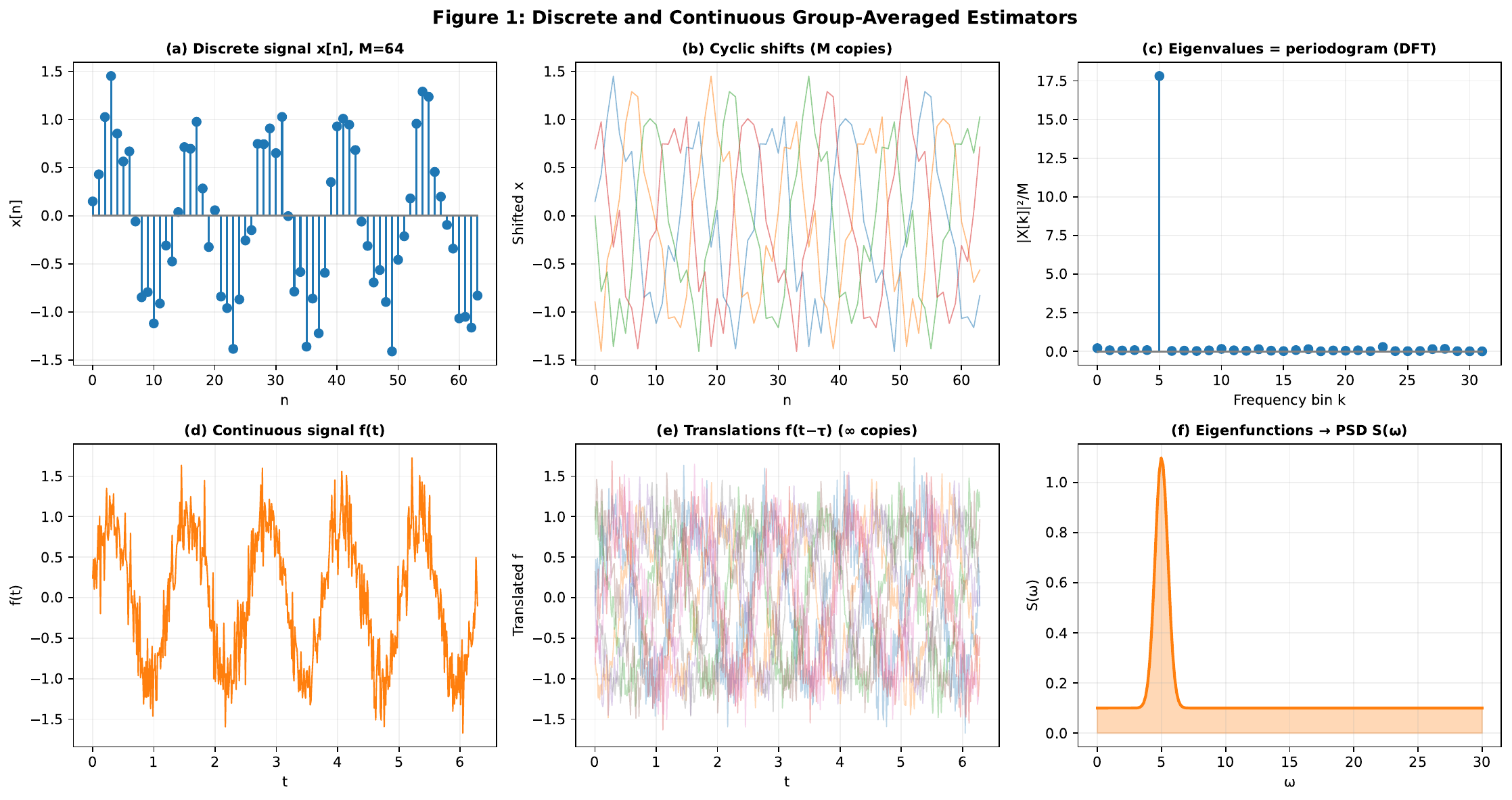}
\caption{The discrete-to-continuous passage.  Top: discrete signal, cyclic shifts, periodogram (DFT eigenvalues).  Bottom: continuous function, translations, power spectral density.  The algebraic construction is identical; only the group changes from $\Z_M$ to $(\R,+)$.}
\label{fig:discrete-continuous}
\end{figure}

\subsection{Principal Contributions}

The individual group-theoretic interpretations of Fourier, wavelet, and time-frequency analysis are known.  Grossmann and Morlet identified wavelets with the affine group in 1984.  Folland connected the STFT to the Heisenberg group in 1989.  Ali, Antoine, and Gazeau unified these as coherent state systems in 2000.  This paper does not rediscover those connections.  It contributes four results that the prior literature does not contain:

\begin{enumerate}
\item \textbf{A computable transform selection criterion.}  The commutativity residual $\delta$, extended from finite groups to operator commutators on $L^2(\R)$, provides the first principled method for choosing among Fourier, wavelet, and time-frequency analysis for a given signal.  Prior work tells you what each transform is.  The commutativity residual tells you which one to use.

\item \textbf{An explanation of the wavelet noise floor.}  The frequency-dependent noise floor in wavelet analysis is well known empirically; practitioners compensate for it with ad hoc normalization.  The Continuous Replacement Theorem traces it to a single cause: the Duflo-Moore operator $C_\rho$, which is nontrivial because the affine group is non-unimodular.  The translation group and the Heisenberg-Weyl group are unimodular, so their noise floors are flat.  This explains a long-standing asymmetry from first principles.

\item \textbf{A polynomial-time algorithm for blind group matching.}  The double-commutator generalized eigenvalue problem (GEVP) identifies the optimal group from a single observation in $O(M^3)$ time.  No prior work in harmonic analysis, algebraic signal processing, or coherent state theory provides an algorithm for selecting the optimal group from data.

\item \textbf{A discretization bridge.}  The Discretization Recovery Theorem establishes that the discrete AD framework is the sampled restriction of the continuous one, with $O(1/M)$ convergence.  Without this bridge, the discrete and continuous frameworks are separate results.  With it, they are one theory.
\end{enumerate}

These four principal contributions are justified by the following theorems and the continuous form of the commutativity residual that provide mathematical foundation.

\begin{enumerate}
\item \textbf{Continuous Replacement Theorem} (Theorem~\ref{thm:continuous_replacement}): Under signal equivariance and noise ergodicity conditions, $\FF_G(x)$ is a positive trace-class operator whose spectral decomposition separates signal and noise subspaces.

\item \textbf{Unification Theorem} (Theorem~\ref{thm:unification}): Classical spectral analysis is continuous AD with the translation group; wavelet analysis is continuous AD with the affine group; time-frequency analysis is continuous AD with the Heisenberg-Weyl group.

\item \textbf{Continuous Commutativity Residual} (Definition~\ref{def:continuous_delta}): The matched-group criterion extends to the Hilbert-Schmidt norm of the operator commutator, classifying stochastic processes by their algebraic symmetry.

\item \textbf{Matched Group Classification} (Theorem~\ref{thm:classification}): Stationary processes are matched to the translation group, self-similar processes to the affine group, and chirp processes to the Heisenberg-Weyl group.

\item \textbf{Discretization Recovery} (Theorem~\ref{thm:discretization}): Discrete AD converges to continuous AD as $M \to \infty$, with $\Z_M \to (\R,+)$, conjugated cyclic $\to$ affine, and $\Z_M \times \Z_M \to$ Heisenberg-Weyl.
\end{enumerate}

\section{Relation to Prior Work}\label{sec:prior}

Several bodies of work touch the territory of this paper.  We distinguish the present contribution from each.

\textbf{Wavelet theory and the affine group.}  Grossmann and Morlet~\cite{grossmann1984} established that the continuous wavelet transform is a square-integrable representation of the affine group, and Daubechies~\cite{daubechies1992} formalized the admissibility conditions and fast algorithms.  The Calder\'{o}n reproducing formula is the resolution of identity for the affine group representation.  These results are contained within the continuous AD framework as the special case $G = \R^+ \ltimes \R$ of the Unification Theorem.  What the wavelet literature does not provide is a criterion for deciding \emph{when} wavelets are the right tool.  The commutativity residual $\delta$ fills this gap: wavelets are optimal when $\delta(\mathrm{Aff}, \RR) < \delta((\R,+), \RR)$, i.e., when the signal's covariance is better matched to scale invariance than to translation invariance.

\textbf{Time-frequency analysis and the Heisenberg group.}  Folland~\cite{folland1989} and Gr\"{o}chenig~\cite{groechenig2001} developed the theory of the short-time Fourier transform, Gabor frames, and the Wigner-Ville distribution from the Heisenberg group perspective.  The Moyal identity, which we use in Proposition~\ref{prop:hw_dm}, is standard in this literature.  Again, the AD contribution is not the individual transform but the \emph{selection criterion}: given a signal, should one use Fourier, wavelet, or time-frequency analysis?  The prior literature treats each as a separate theory with its own optimality conditions.  The AD framework subsumes all three under a single construction and provides a uniform metric ($\delta$) for choosing among them.

\textbf{Algebraic signal processing (ASP).}  P\"{u}schel and Moura~\cite{puschel2008foundation,puschel2008space} developed ASP as an axiomatic framework in which a ``signal model'' (an algebra, a module, and a homomorphism) determines a class of linear transforms.  ASP provides the \emph{atlas}: given a signal model, it identifies the associated transform.  AD provides the \emph{compass}: given observed data, it identifies which signal model (and hence which transform) is optimal.  The two frameworks are complementary.  ASP tells you what the transforms are; AD tells you which one to use.  The continuous extension in this paper further separates the two: ASP operates on polynomial algebras and does not extend naturally to $L^2(\R)$, while AD extends directly via Haar integration.

\textbf{Coherent states.}  Ali, Antoine, and Gazeau~\cite{ali2000} provide a comprehensive treatment of coherent states, wavelets, and their generalizations from the viewpoint of square-integrable group representations.  The Duflo-Moore theorem, which we use as the convergence foundation in Section~\ref{sec:duflo_moore}, is central to their development.  The continuous group-averaged estimator can be viewed as the reproducing kernel of a coherent state system.  The novelty of the AD perspective is the estimation-theoretic interpretation: the group-averaged estimator is not merely a resolution of identity but a \emph{single-observation statistical estimator} with signal-noise separation properties (Theorem~\ref{thm:continuous_replacement}).

\textbf{Compressed sensing.}  Cand\`{e}s, Romberg, and Tao~\cite{candes2006} showed that signals sparse in some basis can be recovered from sub-Nyquist measurements.  AD and compressed sensing exploit orthogonal structural properties: sparsity $s$ (the number of nonzero coefficients) and group order $|G|$ (the number of symmetry-related copies).  A signal can be sparse but non-symmetric, symmetric but non-sparse, or both.  The two frameworks compose: AD can provide the basis in which sparsity is assessed, and compressed sensing can reduce the measurement burden for signals that are sparse in the AD-optimal basis.

\textbf{What is new here.}  The prior works cited above each develop the group-theoretic interpretation of a single transform family.  This paper does three things none of them does: (1)~it unifies all transform families under a single group-averaged estimator with uniform signal-equivariance and noise-ergodicity conditions; (2)~it provides the commutativity residual $\delta$ as a computable criterion for selecting among Lie groups; and (3)~it proves the discretization recovery theorem connecting the continuous and discrete frameworks.

\section{Mathematical Preliminaries}

\subsection{Locally Compact Groups and Haar Measure}

Let $G$ be a locally compact topological group.  A \emph{left Haar measure} on $G$ is a nonzero Radon measure $\mu$ satisfying $\mu(gE) = \mu(E)$ for all $g \in G$ and all Borel sets $E \subseteq G$.  Such a measure exists and is unique up to a positive multiplicative constant.  For compact groups, we normalize $\mu(G) = 1$.  For the non-compact groups of interest, the left Haar measures are:
\begin{itemize}
\item Translation group $(\R, +)$: $d\mu(\tau) = d\tau$ (Lebesgue measure).  Unimodular.
\item Affine group $(\R^+ \ltimes \R)$: $d\mu(a,b) = \frac{da\,db}{a^2}$.  \textbf{Non-unimodular:} the modular function is $\Delta(a,b) = 1/a$.
\item Heisenberg-Weyl group $H$: $d\mu(\tau,\nu,\phi) = d\tau\,d\nu\,d\phi$.  Unimodular.
\end{itemize}
The non-unimodularity of the affine group has direct consequences for the noise structure of the group-averaged estimator (see Remark~\ref{rem:noise_floor}).

\subsection{Unitary Representations and Square-Integrability}

A \emph{unitary representation} of $G$ on a Hilbert space $\HH$ is a group homomorphism $\rho: G \to U(\HH)$ that is strongly continuous.

\begin{definition}[Square-Integrable Representation]\label{def:sq_int}
A unitary representation $(\rho, \HH)$ of $G$ is \emph{square-integrable} (or \emph{discrete series}) if there exists a nonzero vector $\psi \in \HH$ (called an \emph{admissible vector}) such that
\begin{equation}
\int_G |\langle \rho(g)\psi, \psi \rangle|^2 \, d\mu(g) < \infty.
\end{equation}
For groups with nontrivial center $Z$, we say $\rho$ is \emph{square-integrable modulo $Z$} if the integral converges over the quotient $G/Z$.
\end{definition}

The three groups of interest have fundamentally different square-integrability properties, which we now characterize.

\subsection{Trace-Class and Hilbert-Schmidt Operators}

An operator $A$ on $\HH$ is \emph{trace-class} if $\sum_k \langle |A| e_k, e_k \rangle < \infty$ for some (and hence every) orthonormal basis $\{e_k\}$.  It is \emph{Hilbert-Schmidt} if $\|A\|_{HS}^2 := \sum_k \|A e_k\|^2 < \infty$.  The Hilbert-Schmidt norm generalizes the Frobenius norm from matrices to operators.

\section{The Duflo-Moore Framework}\label{sec:duflo_moore}

The group-averaged estimator~\eqref{eq:cad} is an integral.  Integrals can diverge.  The question is: when does this one converge, and what controls the answer?

For finite groups, the question does not arise.  A finite sum of finite-rank operators is always finite.  For Lie groups, the integral is over a non-compact domain, and convergence depends on how fast the integrand oscillates.  If the representation $\rho$ is ``square-integrable,'' the oscillation is fast enough to produce cancellation, and the integral converges.  If it is not, we need a different argument.

The three groups of interest fall into three different categories.  The affine group has a square-integrable representation, and convergence follows from a classical result of Duflo and Moore.  The Heisenberg-Weyl group is square-integrable only after quotienting out its center.  The translation group is not square-integrable at all, and convergence relies on the signal being in $L^2$ rather than on any property of the representation.  These are not three versions of the same argument.  They are three genuinely different mechanisms, and the differences explain real phenomena: why wavelets need an admissibility condition, why the STFT does not, and why the Fourier transform works for any finite-energy signal.

\begin{theorem}[Duflo-Moore Orthogonality Relations~\cite{duflo1976}]\label{thm:duflo_moore}
Let $G$ be a locally compact group with left Haar measure $\mu$, and let $(\rho, \HH)$ be an irreducible square-integrable unitary representation.  Then there exists a unique positive, self-adjoint, densely defined operator $C_\rho$ (the \emph{Duflo-Moore operator}) such that for all $\psi_1, \psi_2 \in \mathrm{Dom}(C_\rho)$ and all $\phi_1, \phi_2 \in \HH$:
\begin{equation}\label{eq:dm_orthog}
\int_G \langle \rho(g)\psi_1, \phi_1\rangle \overline{\langle \rho(g)\psi_2, \phi_2\rangle} \, d\mu(g) = \langle C_\rho \psi_1, C_\rho \psi_2\rangle \langle \phi_2, \phi_1\rangle.
\end{equation}
If $G$ is unimodular, then $C_\rho = d_\rho^{-1/2} I$ where $d_\rho > 0$ is the \emph{formal degree} of $\rho$, and~\eqref{eq:dm_orthog} reduces to the Schur orthogonality relations.
\end{theorem}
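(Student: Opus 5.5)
The plan is the classical Duflo--Moore route: turn the orthogonality integral into an intertwining sesquilinear form, apply Schur's lemma, and then identify the $\psi$-dependence with a positive operator via a closed quadratic form.

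\textbf{Step 1: the voice transform and its domain.} For $\psi\in\HH$ define the voice transform $V_\psi\phi(g)=\langle \phi,\rho(g)\psi\rangle$. Let
\[
\mathcal{D}=\{\psi\in\HH:\ V_\psi\phi\in L^2(G,\mu)\text{ for all }\phi\in\HH\}.
\]
First I would show that $\mathcal{D}$ is a linear subspace and is invariant under $\rho(G)$. Then I would show $\mathcal{D}\neq 0$. Since $\rho$ is square-integrable, there is an admissible $\psi$ with $V_\psi\psi\in L^2$. The set of $\phi$ with $V_\psi\phi\in L^2$ is a nonzero invariant subspace, because $V_\psi(\rho(h)\phi)(g)=V_\psi\phi(h^{-1}g)$ and left Haar measure is invariant. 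By irreducibility this subspace is dense. A closed-graph argument then makes $V_\psi:\HH\to L^2(G)$ everywhere defined and bounded.

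\textbf{Step 2: Schur's lemma.} This step handles the $\phi$-dependence. For $\psi_1,\psi_2\in\mathcal{D}$, the form
\[
B(\phi_1,\phi_2)=\int_G V_{\psi_1}\phi_1\,\overline{V_{\psi_2}\phi_2}\,d\mu
\]
is bounded and sesquilinear. Its associated operator $T$, defined by $B(\phi_1,\phi_2)=\langle T\phi_1,\phi_2\rangle$, commutes with $\rho(G)$, again by left invariance of $\mu$. Schur's lemma then gives $T=c(\psi_1,\psi_2)I$. This yields the factor $\langle\phi_2,\phi_1\rangle$, up to the conjugation convention fixed by the ordering in the statement.

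\textbf{Step 3: constructing $C_\rho$.} This is where I expect the main obstacle. The scalar $c(\psi_1,\psi_2)$ is a positive, Hermitian sesquilinear form on the dense domain $\mathcal{D}$. I must show this form is closed, i.e.\ lower semicontinuous. I would argue via Fatou's lemma: if $\psi_n\to\psi$ with $c(\psi_n,\psi_n)$ bounded, then $V_{\psi_n}\phi\to V_\psi\phi$ pointwise, so $\|V_\psi\phi\|^2\le\liminf_n\|V_{\psi_n}\phi\|^2$.

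Given closedness, Kato's representation theorem provides a unique positive self-adjoint operator $A$ with $c(\psi_1,\psi_2)=\langle A^{1/2}\psi_1,A^{1/2}\psi_2\rangle$ and $\mathrm{Dom}(A^{1/2})=\mathcal{D}$. Setting $C_\rho=A^{1/2}$ gives the identity. Uniqueness follows because $C_\rho^2$ is determined by the form on its form domain.

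To check that $C_\rho$ transforms correctly, I would track the modular function. Right translation gives $c(\rho(h)\psi,\rho(h)\psi)=\Delta(h^{-1})\,c(\psi,\psi)$, so $\rho(h)C_\rho^2\rho(h)^{*}=\Delta(h)^{-1}C_\rho^2$, with the exponent sign depending on convention.

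\textbf{Step 4: the unimodular case.} If $\Delta\equiv1$, then $C_\rho^2$ is a positive self-adjoint operator commuting with $\rho(G)$. Schur's lemma for unbounded operators, applied through the bounded spectral projections of $C_\rho$, forces $C_\rho^2=\lambda I$. This also gives $\mathcal{D}=\HH$. Writing $\lambda=d_\rho^{-1}$, with $d_\rho>0$ because the form is nonzero, recovers the Schur orthogonality relations.
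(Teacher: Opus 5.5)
The paper gives no proof of this theorem (it is quoted from Duflo and Moore), and your outline follows the standard argument from that literature. Step 2, the existence part of Step 3, and Step 4 are correct once Step 1 is in place. In Step 3, Fatou gives lower semicontinuity, hence a closed form, and even shows that limits stay in $\mathcal{D}$; Kato then applies.

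The gap is in Step 1, at ``a closed-graph argument then makes $V_\psi$ everywhere defined.'' At that point you only know that $D_\psi=\{\phi : V_\psi\phi\in L^2(G,\mu)\}$ is a dense $\rho$-invariant subspace on which $V_\psi$ is closed. The closed graph theorem yields boundedness only for an operator already defined on all of $\HH$, as it is in Step 2, where $\psi\in\mathcal{D}$ by definition. A closed operator on a dense invariant domain can be unbounded. Your own setup shows that density plus closedness cannot be the mechanism. For fixed $\phi\neq 0$, the map $\psi\mapsto V_\psi\phi$ also has a dense $\rho$-invariant domain and a closed graph. Yet for the affine group on $L^2_+(\R)$ its domain is exactly the proper subspace of vectors with finite Calder\'on constant.

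What separates the two maps is unitarity. You have $V_\psi\rho(h)=\lambda(h)V_\psi$ with $\lambda$ the left regular representation, which is unitary because $\mu$ is left invariant; right translation is unitary only after a factor $\Delta^{1/2}$. The missing step is Schur's lemma applied to the unbounded intertwiner:
\begin{itemize}
\item Since $V_\psi$ is closed and densely defined, $V_\psi^*V_\psi$ is positive self-adjoint.
\item The intertwining relation plus unitarity of $\lambda$ give $\rho(h)V_\psi^*V_\psi\rho(h)^*=V_\psi^*V_\psi$.
\item Its spectral projections therefore commute with $\rho(G)$, and irreducibility forces $V_\psi^*V_\psi=cI$ with domain $\HH$.
\item Since $\mathrm{Dom}(V_\psi^*V_\psi)\subseteq D_\psi$, this gives $D_\psi=\HH$.
\end{itemize}
Equivalently, give $D_\psi$ its graph norm, on which $\rho$ acts unitarily. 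For the inclusion $J$ into $\HH$, $JJ^*$ is bounded, positive and commutes with $\rho$, so $JJ^*=cI$ with $c>0$ and $J$ is onto. Only then is the admissible $\psi$ in $\mathcal{D}$, so that $\mathcal{D}\neq 0$ and the form in Step 3 is densely defined.

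A smaller point concerns uniqueness. Your argument shows $C_\rho$ is unique among positive self-adjoint operators with $\mathrm{Dom}(C_\rho)=\mathcal{D}$, which is the usual Duflo--Moore formulation. Read literally, the statement only requires the identity on $\mathrm{Dom}(C_\rho)$, which merely forces a competitor's domain to lie inside $\mathcal{D}$. When $C_\rho$ is unbounded (the non-unimodular case), you can restrict your closed form to the kernel of a functional that is bounded in the form norm but not in the norm of $\HH$. For the affine group, for example, take $\psi\mapsto\int_0^1\hat\psi(\omega)\,\omega^{-1/2}\,d\omega$. This gives another closed, densely defined form, hence a different operator satisfying the same identity. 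So you should make $\mathrm{Dom}(C_\rho)=\mathcal{D}$ part of the uniqueness claim.
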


We now compute $C_\rho$ explicitly for each of the three groups.

\begin{proposition}[Affine Group: Explicit Duflo-Moore Operator]\label{prop:affine_dm}
Let $G_{\mathrm{Aff}} = \R^+ \ltimes \R$ with left Haar measure $d\mu(a,b) = a^{-2}\,da\,db$ and representation $\rho(a,b)x(t) = a^{-1/2}x((t-b)/a)$ on $L^2(\R)$.  Then:
\begin{enumerate}
\item[(i)] The representation $\rho$ decomposes into two irreducible subrepresentations on $L^2_+(\R) = \{x : \hat{x}(\omega) = 0 \text{ for } \omega < 0\}$ and $L^2_-(\R) = \{x : \hat{x}(\omega) = 0 \text{ for } \omega > 0\}$.
\item[(ii)] On each irreducible component, the Duflo-Moore operator is multiplication by $|\omega|^{-1/2}$ in the frequency domain:
\begin{equation}
(C_\rho \psi)\,\widehat{\phantom{x}}(\omega) = |\omega|^{-1/2}\hat{\psi}(\omega).
\end{equation}
\item[(iii)] A vector $\psi \in L^2(\R)$ is admissible if and only if the \emph{Calder\'{o}n admissibility constant}
\begin{equation}\label{eq:calderon}
c_\psi := \int_0^{\infty} \frac{|\hat{\psi}(\omega)|^2}{\omega} \, d\omega < \infty.
\end{equation}
This holds for any $\psi$ with $\hat{\psi}(0) = 0$ (zero mean) and sufficient decay.
\end{enumerate}
\end{proposition}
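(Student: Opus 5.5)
Everything will be done on the Fourier side, since the affine action becomes a simple multiplicative action there. With the convention $\hat{x}(\omega)=\int x(t)e^{-i\omega t}\,dt$, a change of variables gives
\begin{equation*}
\widehat{\rho(a,b)x}(\omega) = a^{1/2} e^{-ib\omega}\,\hat{x}(a\omega).
\end{equation*}
Dilations therefore act on $\omega$ by positive scaling and preserve its sign, while translations only multiply by a phase. Hence $L^2_\pm(\R)$ are closed and invariant under $\rho$, and $L^2(\R)=L^2_+\oplus L^2_-$ orthogonally.

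\emph{Irreducibility for (i).} On $L^2_+$, let $P$ be a bounded operator commuting with all $\rho(a,b)$. Commuting with the translations $\rho(1,b)$ means commuting with multiplication by $e^{-ib\omega}$ for every $b$. These exponentials generate the maximal abelian von Neumann algebra $L^\infty(\R^+)$, so $P$ must be multiplication by some $m\in L^\infty(\R^+)$. Commuting with dilations then forces $m(a\omega)=m(\omega)$ almost everywhere for every $a>0$. By ergodicity of $\R^+$ acting on itself, $m$ is constant. Schur's lemma then gives irreducibility, and the same argument applies to $L^2_-$.

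\emph{Orthogonality relation for (ii).} I compute the left side of \eqref{eq:dm_orthog} directly. By Plancherel,
\begin{equation*}
\langle\rho(a,b)\psi,\phi\rangle = \frac{1}{2\pi}\int a^{1/2}e^{-ib\omega}\hat\psi(a\omega)\overline{\hat\phi(\omega)}\,d\omega,
\end{equation*}
which is, up to constants, the inverse Fourier transform in $b$ of $F_a(\omega)=a^{1/2}\hat\psi(a\omega)\overline{\hat\phi(\omega)}$. For fixed $a$, Parseval in the variable $b$ turns $\int db$ of the product of two such coefficients into $\frac{1}{2\pi}\int a\,\hat\psi_1(a\omega)\overline{\hat\psi_2(a\omega)}\,\overline{\hat\phi_1(\omega)}\hat\phi_2(\omega)\,d\omega$. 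I then integrate against $a^{-2}\,da$ and use Fubini/Tonelli to swap the $a$ and $\omega$ integrals, justified first for nice $\psi,\phi$ and then by density. The inner integral is
\begin{equation*}
\int_0^\infty \hat\psi_1(a\omega)\overline{\hat\psi_2(a\omega)}\,\frac{da}{a}.
\end{equation*}
Substituting $\xi=a\omega$, this is independent of $\omega$ within each sign class and equals $\int \hat\psi_1(\xi)\overline{\hat\psi_2(\xi)}\,|\xi|^{-1}\,d\xi$ over the matching half-line. That quantity is $\langle C_\rho\psi_1,C_\rho\psi_2\rangle$ with $C_\rho$ multiplication by $|\omega|^{-1/2}$, up to the fixed $2\pi$ normalization that can be absorbed into the Haar measure. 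The remaining $\omega$-integral is $\langle\phi_2,\phi_1\rangle$, which gives \eqref{eq:dm_orthog}. The uniqueness in Theorem~\ref{thm:duflo_moore} then identifies this operator as $C_\rho$ on each irreducible component. The operator is clearly positive, self-adjoint, and densely defined on the domain $\{\int|\hat\psi|^2/|\omega|<\infty\}$.

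\emph{Admissibility for (iii).} I take $\psi_1=\psi_2=\phi_1=\phi_2=\psi$ in the computation above. Then $\int_G|\langle\rho(g)\psi,\psi\rangle|^2\,d\mu = \|C_\rho\psi\|^2\|\psi\|^2$, and by Tonelli this identity holds even when one side is infinite. So $\psi$ is admissible exactly when $\|C_\rho\psi\|<\infty$, which on $L^2_+$ is the condition $c_\psi<\infty$; for general $\psi$ the same condition is needed on both half-lines.

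\emph{Sufficient condition in (iii).} If $\hat\psi(0)=0$ and $\hat\psi$ is $C^1$ near $0$, which holds when, for example, $t\psi\in L^1$, then $|\hat\psi(\omega)|^2/\omega=O(\omega)$ near $0$. Integrability at infinity follows from $\psi\in L^2$, because $1/\omega\le1$ there.

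The main obstacle is the interchange of integrals in (ii) together with its extension from nice functions to all of $\mathrm{Dom}(C_\rho)\times\HH$. I would first prove the identity for Schwartz-class $\hat\psi_i,\hat\phi_i$ supported away from $0$. I would then extend using Tonelli for the diagonal case and polarization plus Cauchy–Schwarz for the off-diagonal case.
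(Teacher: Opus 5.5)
Your proof is correct and follows the paper's route. Like the paper, you pass to the Fourier side, apply Parseval in $b$ for fixed $a$, then integrate against $a^{-2}\,da$ and use the scale-invariant substitution $\xi = a\omega$ to produce the Calder\'on integral. You also supply several things the paper omits: the irreducibility argument (the translations generate $L^\infty(\R^+)$, and dilation-ergodicity forces the multiplier to be constant), the full sesquilinear identity, the $2\pi$ normalization, and the two-half-line condition in (iii). Your observation that the inner $a$-integral is independent of $\omega$ is the correct one; the paper's ``$c_\psi/|\omega|$'' is a slip.
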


\begin{proof}
Part~(i) follows from the fact that the Fourier transform intertwines $\rho(a,b)$ with multiplication by $e^{-ib\omega}$ and dilation $\omega \mapsto a\omega$; the half-lines $\omega > 0$ and $\omega < 0$ are invariant.

For~(ii), compute $(C_\rho \psi)\,\widehat{\phantom{x}}$ by applying Theorem~\ref{thm:duflo_moore} with $\psi_1 = \psi_2 = \phi_1 = \phi_2 = \psi$ and using the explicit form of the wavelet coefficient $\langle \rho(a,b)x, \psi\rangle = a^{-1/2}\int x(t)\overline{\psi((t-b)/a)}\,dt$.  In the frequency domain, this becomes $\sqrt{a}\int \hat{x}(\omega)\overline{\hat{\psi}(a\omega)}e^{ib\omega}d\omega$.  Integrating $|\langle \rho(a,b)x, \psi\rangle|^2$ over $b$ first (using Parseval) eliminates the cross-terms, and integrating over $a$ with measure $da/a^2$ yields the factor $\int |\hat{\psi}(a\omega)|^2 da/a = c_\psi/|\omega|$.  Comparing with~\eqref{eq:dm_orthog} identifies $\|C_\rho \psi\|^2 = c_\psi$, hence $C_\rho$ acts as multiplication by $|\omega|^{-1/2}$.

Part~(iii) follows immediately: $\psi \in \mathrm{Dom}(C_\rho)$ requires $C_\rho \psi \in L^2$, i.e., $\int |\omega|^{-1}|\hat{\psi}(\omega)|^2 d\omega < \infty$, which is~\eqref{eq:calderon}.
\end{proof}

\begin{proposition}[Heisenberg-Weyl Group: Square-Integrability Modulo Center]\label{prop:hw_dm}
Let $H = \{(\tau, \nu, \phi) : \tau, \nu, \phi \in \R\}$ be the Heisenberg-Weyl group with multiplication $(\tau_1,\nu_1,\phi_1)\cdot(\tau_2,\nu_2,\phi_2) = (\tau_1+\tau_2, \nu_1+\nu_2, \phi_1+\phi_2+\tau_1\nu_2)$ and the Schr\"{o}dinger representation $\rho(\tau,\nu,\phi)x(t) = e^{i\phi}e^{i2\pi\nu t}x(t-\tau)$ on $L^2(\R)$.  Then:
\begin{enumerate}
\item[(i)] By the Stone-von Neumann theorem, $\rho$ is (up to unitary equivalence) the unique irreducible representation of $H$ with central character $e^{i\phi}$.
\item[(ii)] The representation is \textbf{not} square-integrable over $H$, because the center $Z = \{(0,0,\phi) : \phi \in \R\}$ acts by scalars.
\item[(iii)] However, $\rho$ is square-integrable modulo $Z$.  On the quotient $H/Z \cong \R^2$ (the time-frequency plane), \textbf{every} nonzero $\psi \in L^2(\R)$ is admissible, and the Moyal identity gives
\begin{equation}\label{eq:moyal}
\int_{\R^2} |\langle x, \rho(\tau,\nu,0)\psi\rangle|^2 \, d\tau\,d\nu = \|x\|^2 \|\psi\|^2.
\end{equation}
\item[(iv)] The Duflo-Moore operator on $H/Z$ is $C_\rho = I$ (the identity), reflecting the unimodularity of $H/Z \cong \R^2$.
\end{enumerate}
\end{proposition}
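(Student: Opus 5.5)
The plan is to treat the four parts in order, but first to settle a convention issue. A direct computation shows that $\rho(\tau_1,\nu_1,\phi_1)\rho(\tau_2,\nu_2,\phi_2)x(t) = e^{i(\phi_1+\phi_2-2\pi\tau_1\nu_2)}e^{i2\pi(\nu_1+\nu_2)t}x(t-\tau_1-\tau_2)$. So $\rho$ is a homomorphism for the stated group law only after the central coordinate is rescaled, i.e.\ the cocycle $\tau_1\nu_2$ is replaced by $-2\pi\tau_1\nu_2$. This is an isomorphic copy of $H$ obtained by reparametrizing the center. I will state this normalization at the outset and work with it; none of (i)--(iv) depends on the scaling of the central variable.

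For (i), I would first prove irreducibility via the commutant. Let $T$ be a bounded operator commuting with all $\rho(g)$. Commuting with all translations $x\mapsto x(\cdot-\tau)$ makes $T$ a Fourier multiplier. Commuting with all modulations $e^{i2\pi\nu t}$ makes $\widehat{T}$, which is then a multiplication operator on the Fourier side, commute with all translations there, so the multiplier is constant. Hence $T$ is scalar, and Schur's lemma gives irreducibility. Uniqueness up to unitary equivalence among irreducibles with central character $e^{i\phi}$ is then exactly the Stone--von Neumann theorem, which I will cite rather than reprove.

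For (ii), the center acts by $\rho(0,0,\phi)=e^{i\phi}I$. Hence $|\langle\rho(\tau,\nu,\phi)\psi,\psi\rangle|$ does not depend on $\phi$, and the integral over $H$ factors as $\int_{\R}d\phi$ times a nonnegative integral over $(\tau,\nu)$. The latter is strictly positive, since the integrand is continuous and equals $\|\psi\|^4>0$ at the identity. So the integral diverges for every $\psi\neq0$.

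For (iii), I work on the section $\phi=0$. Because the center only contributes unimodular phases, $|\langle x,\rho(\tau,\nu,\phi)\psi\rangle|$ descends to a well-defined function on $H/Z\cong\R^2$. Set $f_\tau(t)=x(t)\overline{\psi(t-\tau)}$, so that $\langle x,\rho(\tau,\nu,0)\psi\rangle=\hat f_\tau(\nu)$. By Cauchy--Schwarz, $f_\tau\in L^1$ for every $\tau$. By Tonelli, $\int\!\!\int|x(t)|^2|\psi(t-\tau)|^2\,dt\,d\tau=\|x\|^2\|\psi\|^2<\infty$, so $f_\tau\in L^2$ for a.e.\ $\tau$. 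Plancherel in $\nu$ for $L^1\cap L^2$ functions gives $\int|\hat f_\tau(\nu)|^2d\nu=\int|f_\tau(t)|^2dt$. Integrating in $\tau$ and applying Tonelli once more yields \eqref{eq:moyal}. In particular, every nonzero $\psi$ has finite square-integral modulo $Z$, so every nonzero $\psi$ is admissible.

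For (iv), I polarize \eqref{eq:moyal} in both $x$ and $\psi$. Expanding $|\langle x_1+\alpha x_2,\rho\psi\rangle|^2$ and the analogous expressions in $\psi$ gives the sesquilinear identity $\int_{\R^2}\langle\rho(\tau,\nu,0)\psi_1,\phi_1\rangle\overline{\langle\rho(\tau,\nu,0)\psi_2,\phi_2\rangle}\,d\tau\,d\nu=\langle\psi_1,\psi_2\rangle\langle\phi_2,\phi_1\rangle$. Comparing with \eqref{eq:dm_orthog} and invoking the uniqueness clause of Theorem~\ref{thm:duflo_moore} identifies $C_\rho=I$, i.e.\ formal degree $d_\rho=1$. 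This agrees with the unimodular case of that theorem, since $H/Z\cong\R^2$ carries Lebesgue measure as a bi-invariant Haar measure.

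The main obstacle is not the computation but rigor in applying Theorem~\ref{thm:duflo_moore} on $H/Z$. There $\rho$ is only a projective representation, so I must check that the orthogonality relations are insensitive to the choice of section. They are, because the phases enter each product as $e^{i\phi}\overline{e^{i\phi}}=1$. The second delicate point is the measure-theoretic justification of Plancherel for a.e.\ $\tau$, which the Tonelli argument in (iii) handles.
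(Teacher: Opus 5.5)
Your proposal is correct and follows the paper's own route. Like the paper, you cite Stone--von Neumann for (i), use the $\phi$-independence of $|\langle\rho(\tau,\nu,\phi)\psi,\psi\rangle|$ for (ii), apply Plancherel in $\nu$ and then integrate in $\tau$ for the Moyal identity, and compare with the Duflo--Moore relation for (iv). Your extra steps only tighten the paper's sketch: you correctly observe that $\rho$ as written obeys the cocycle $-2\pi\tau_1\nu_2$ rather than $\tau_1\nu_2$, you add the commutant proof of irreducibility and the a.e.-$\tau$ justification of Plancherel, and your full polarization actually determines $C_\rho = I$ rather than just checking it on the diagonal.
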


\begin{proof}
Part~(i) is the Stone-von Neumann theorem~\cite{folland1989}.

For~(ii), $|\langle \rho(\tau,\nu,\phi)x, \psi\rangle| = |\langle \rho(\tau,\nu,0)x, \psi\rangle|$ is independent of $\phi$, so the integral over $\phi \in \R$ diverges.

Part~(iii): The expression $V_\psi x(\tau,\nu) := \langle x, \rho(\tau,\nu,0)\psi\rangle = \int x(t)\overline{\psi(t-\tau)}e^{-i2\pi\nu t}\,dt$ is the short-time Fourier transform (STFT).  The Moyal identity~\eqref{eq:moyal} is a standard result in time-frequency analysis~\cite{folland1989,groechenig2001}: it follows from applying Parseval's theorem in $\nu$ (which collapses the frequency integral) and then in $\tau$ (which collapses the time integral).  No admissibility condition is needed beyond $\psi \neq 0$.

Part~(iv) follows from comparing~\eqref{eq:moyal} with~\eqref{eq:dm_orthog}: the right-hand side of~\eqref{eq:dm_orthog} with $C_\rho = I$ gives $\|\psi\|^2\|x\|^2 = \|x\|^2\|\psi\|^2$, consistent.
\end{proof}

\begin{proposition}[Translation Group: Convergence Without Square-Integrability]\label{prop:translation}
The translation group $(\R, +)$ is Abelian, unimodular, and non-compact.  Its irreducible representations are the one-dimensional characters $\chi_\omega(\tau) = e^{i\omega\tau}$, which are \textbf{not} square-integrable: $\int_\R |e^{i\omega\tau}|^2 d\tau = \infty$.  The Duflo-Moore theorem does not apply.  Nevertheless, the group-averaged estimator converges:
\begin{enumerate}
\item[(i)] For any $x \in L^2(\R)$, the kernel
\begin{equation}\label{eq:autocorr}
[\FF_{(\R,+)}(x)](t,s) = \int_{-\infty}^{\infty} x(t+\tau)\overline{x(s+\tau)}\,d\tau =: R_{xx}(t-s)
\end{equation}
converges absolutely for every $(t,s)$, with $|R_{xx}(t-s)| \leq \|x\|^2$.
\item[(ii)] By the Plancherel theorem, $\FF_{(\R,+)}(x)$ is a convolution operator unitarily equivalent to multiplication by the energy spectral density $|\hat{x}(\omega)|^2$ in the frequency domain.
\item[(iii)] The trace satisfies $\Tr(\FF_{(\R,+)}(x)) = R_{xx}(0) = \|x\|^2$.
\end{enumerate}
\end{proposition}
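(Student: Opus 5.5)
The plan is to work directly with the kernel of $\FF_{(\R,+)}(x)$, obtained by specializing~\eqref{eq:cad} to $\rho(\tau)x(t) = x(t+\tau)$ with Lebesgue Haar measure. First, the non-square-integrability claim is immediate, since $|\chi_\omega(\tau)|^2 = 1$ and therefore $\int_\R |\chi_\omega(\tau)|^2 d\tau = \infty$. The same is true of every matrix coefficient of an irreducible (one-dimensional) representation. Hence the hypothesis of Theorem~\ref{thm:duflo_moore} fails, and convergence has to come from the signal itself.

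For~(i), I fix $(t,s)$ and note that $\tau \mapsto x(t+\tau)$ and $\tau \mapsto x(s+\tau)$ are translates of $x$, so both lie in $L^2(\R)$ with norm $\|x\|$. Cauchy--Schwarz then gives absolute convergence and the bound $|R_{xx}(t-s)| \le \|x\|^2$. Substituting $u = s+\tau$ shows the integral equals $\int x(u + (t-s))\overline{x(u)}\,du$, so it depends only on $t-s$. This justifies the notation $R_{xx}(t-s)$.

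For~(ii), I compute the Fourier transform of $R_{xx}$. Writing $R_{xx}(\sigma) = \langle \rho(\sigma)x, x\rangle$ and applying Plancherel gives $R_{xx}(\sigma) = \int |\hat{x}(\omega)|^2 e^{i\omega\sigma}\,d\omega$, up to the $2\pi$ convention. Thus $R_{xx}$ is the inverse Fourier transform of the $L^1$ function $|\hat{x}|^2$, which is the Wiener--Khinchin relation for finite-energy signals. The operator with kernel $R_{xx}(t-s)$ is convolution by $R_{xx}$. On Schwartz functions $f$ we have $\widehat{R_{xx} * f} = |\hat x|^2 \hat f$, so conjugation by the Fourier transform turns $\FF_{(\R,+)}(x)$ into multiplication by $|\hat{x}(\omega)|^2$. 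Because $|\hat x|^2 \in L^1$ but need not be bounded, this multiplication operator is in general densely defined, positive and self-adjoint. It is bounded exactly when $\hat x \in L^\infty$, which holds for instance when $x \in L^1 \cap L^2$. I would state the unitary equivalence at the level of this (possibly unbounded) self-adjoint operator.

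For~(iii), the main obstacle is making sense of the trace. A nonzero convolution operator on $L^2(\R)$ has continuous spectrum and is never trace-class, so the operator trace of $\FF_{(\R,+)}(x)$ is infinite. I would therefore \emph{interpret} $\Tr$ as the kernel-diagonal trace $\int$-density, that is, the trace per unit length given by the diagonal value $K(t,t)$. Translation invariance makes this value constant, and it equals $R_{xx}(0) = \int |x(u)|^2 du = \|x\|^2$. Equivalently, by~(ii) it equals $\int|\hat x|^2\,d\omega = \|x\|^2$ under the unitary Fourier normalization. I would state this normalized-trace convention explicitly rather than claim trace-class membership.
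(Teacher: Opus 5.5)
Your arguments for (i) and (ii) match the paper's: Cauchy--Schwarz with translation invariance of the $L^2$ norm, then the correlation theorem $\widehat{R_{xx}}=|\hat x|^2$. The difference is that you notice two things the paper's proof passes over, and on both you are right.

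In (ii), the paper asserts unitary equivalence to multiplication by $|\hat x|^2$ without noting that this multiplier is only in $L^1$. The operator is therefore bounded only when $\hat x\in L^\infty$.

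In (iii), the paper's entire proof is the evaluation $R_{xx}(0)=\int|x|^2=\|x\|^2$, which it silently identifies with $\Tr(\FF_{(\R,+)}(x))$. As you say, this number is the constant diagonal of a translation-invariant kernel, i.e.\ a trace per unit length. The genuine operator trace is $+\infty$ for every $x\neq0$. So your normalized-trace convention is not a weakening of the paper's argument; it is the only reading under which (iii) holds. The same observation shows that $\FF_{(\R,+)}(x)$ is never the ``positive trace-class operator'' the paper later attributes to $\FF_G(x)$.

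Two small repairs to your own write-up.

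First, the justification ``has continuous spectrum'' is not accurate. For $\hat x=\mathbf{1}_{[0,1]}$ the operator is a multiple of an infinite-rank projection, so its spectrum is pure point. What you need is only non-compactness, or more directly Tonelli applied to the weak definition of the Haar integral. For any orthonormal basis $\{e_k\}$,
\[
\sum_k \int_\R |\langle e_k,\rho(\tau)x\rangle|^2\,d\tau=\int_\R \sum_k |\langle e_k,\rho(\tau)x\rangle|^2\,d\tau=\int_\R\|x\|^2\,d\tau=+\infty\qquad(x\neq0).
\]
The left side is the trace of the positive operator defined by the Haar integral. This computation also shows that $\|x\|^2$ is precisely the trace per unit of Haar measure.

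Second, in (ii) the Schwartz class need not lie in the operator's domain. If $\hat x\notin L^4_{\mathrm{loc}}$, for example $\hat x(\omega)=|\omega|^{-1/4}\mathbf{1}_{[-1,1]}(\omega)$, then $|\hat x|^2\hat f\notin L^2$ for Gaussian $f$. It is cleaner to define the operator through its quadratic form
\[
q(f)=\int_\R|\langle f,\rho(\tau)x\rangle|^2\,d\tau=2\pi\int_\R|\hat x(\omega)|^2|\hat f(\omega)|^2\,d\omega,
\]
obtained by Plancherel in $\tau$ with the unitary normalization. This form is closed on its maximal domain and is exactly what the Haar integral means weakly. Its associated self-adjoint operator is the (possibly unbounded) multiplication operator you want, with no detour through convolution on Schwartz functions.
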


\begin{proof}
Part~(i) follows from Cauchy-Schwarz:
$|R_{xx}(t-s)| = \left|\int x(t+\tau)\overline{x(s+\tau)}\,d\tau\right| \leq \|x(\cdot+t)\|_{L^2}\|x(\cdot+s)\|_{L^2} = \|x\|^2$
by translation invariance of the $L^2$ norm.

Part~(ii): Taking the Fourier transform of $R_{xx}(\Delta) = \int x(t)\overline{x(t-\Delta)}\,dt$ with respect to $\Delta$, the correlation theorem (a consequence of Plancherel) gives $\widehat{R_{xx}}(\omega) = |\hat{x}(\omega)|^2$.

Part~(iii): $R_{xx}(0) = \int |x(t)|^2 dt = \|x\|^2$.
\end{proof}

\begin{remark}\label{rem:convergence_types}
The three groups exhibit three distinct convergence mechanisms for the group-averaged estimator:
\begin{itemize}
\item \textbf{Affine group:} Convergence via the Duflo-Moore theorem; requires an admissible wavelet $\psi$ satisfying the Calder\'{o}n condition~\eqref{eq:calderon}.
\item \textbf{Heisenberg-Weyl group (mod center):} Convergence via the Moyal identity; \emph{every} nonzero window $\psi$ is admissible.
\item \textbf{Translation group:} Convergence via Cauchy-Schwarz on $L^2$; the representation is not square-integrable, but the integrand has finite $L^1$ norm because $x \in L^2$.
\end{itemize}
These differences are not pathological; they reflect genuine physical distinctions among the three analysis methods and explain, for instance, why wavelet analysis requires a carefully chosen wavelet while the Fourier transform does not.
\end{remark}

\section{The Continuous Group-Averaged Estimator}\label{sec:estimator}

With the convergence machinery in place, we can state the central definition.  The idea is simple.  Take a waveform $x$.  Apply every group action $\rho(g)$ to it.  At each $g$, form the rank-one outer product $[\rho(g)x]\otimes[\rho(g)x]^*$.  Integrate over the group.  The result is a positive operator whose eigenfunctions are the ``natural basis'' for $x$ relative to the group $G$, and whose eigenvalues measure the energy in each basis component.

\begin{definition}[Continuous Group-Averaged Estimator]\label{def:cad_estimator}
Let $x \in L^2(\R)$, let $G$ be a locally compact group with Haar measure $\mu$, and let $\rho: G \to U(L^2(\R))$ be a unitary representation.  The \emph{continuous group-averaged estimator} is the operator
\begin{equation}\label{eq:cad_estimator}
\FF_G(x) = \int_G [\rho(g)x] \otimes [\rho(g)x]^* \, d\mu(g),
\end{equation}
where $[f \otimes h^*](\cdot) = \langle \cdot, h \rangle f$ is the rank-one operator.  The integral is interpreted according to the convergence mechanism appropriate to $G$ (Remark~\ref{rem:convergence_types}).
\end{definition}

\begin{lemma}[Integrand Norm Invariance]\label{lem:integrand_norm}
For any unitary representation $\rho$ and any $x \in L^2(\R)$:
\begin{equation}
\|\rho(g)x \otimes (\rho(g)x)^*\|_{HS} = \|x\|^2 \quad \text{for all } g \in G.
\end{equation}
\end{lemma}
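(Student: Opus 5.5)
The plan is to reduce the claim to a general fact about rank-one operators and then use unitarity of $\rho(g)$. Let $f,h\in\HH$ and let $f\otimes h^*$ denote the rank-one operator $u\mapsto \langle u,h\rangle f$, as in Definition~\ref{def:cad_estimator}. We first show that
\begin{equation*}
\|f\otimes h^*\|_{HS}=\|f\|\,\|h\|.
\end{equation*}
To do this, fix any orthonormal basis $\{e_k\}$ of $\HH$ and compute directly from the definition of the Hilbert-Schmidt norm:
\begin{equation*}
\|f\otimes h^*\|_{HS}^2=\sum_k \|\langle e_k,h\rangle f\|^2=\|f\|^2\sum_k|\langle e_k,h\rangle|^2=\|f\|^2\|h\|^2,
\end{equation*}
where the last equality is Parseval's identity for $h$. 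Independence from the choice of basis is already built into the definition recalled in the preliminaries. This also shows that $f\otimes h^*$ is Hilbert-Schmidt, so the norm is finite.

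Next we specialize to $f=h=\rho(g)x$, which gives
\begin{equation*}
\|\rho(g)x\otimes(\rho(g)x)^*\|_{HS}=\|\rho(g)x\|^2.
\end{equation*}
Because $\rho(g)\in U(\HH)$ for every $g\in G$, we have $\|\rho(g)x\|=\|x\|$, and the identity $\|x\|^2$ follows for all $g$. No continuity or integrability property of $\rho$ is needed, since the statement is pointwise in $g$.

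There is no real obstacle. The only point needing care is justifying the interchange in $\sum_k\|\langle e_k,h\rangle f\|^2$, but this is a sum of nonnegative terms, so nothing has to be checked. As an alternative route, the same conclusion follows from $(f\otimes f^*)^*(f\otimes f^*)=\|f\|^2\, f\otimes f^*$ together with $\Tr(f\otimes f^*)=\|f\|^2$.
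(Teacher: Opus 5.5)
Your proof is correct and follows the paper's argument, which is the one-line computation $\|\rho(g)x \otimes (\rho(g)x)^*\|_{HS}^2 = \|\rho(g)x\|^4 = \|x\|^4$ by unitarity. You additionally derive the rank-one identity $\|f\otimes h^*\|_{HS}=\|f\|\,\|h\|$ from an orthonormal basis and Parseval, which the paper uses without proof.
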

\begin{proof}
$\|\rho(g)x \otimes (\rho(g)x)^*\|_{HS}^2 = \|\rho(g)x\|^4 = \|x\|^4$ by unitarity.
\end{proof}

This lemma shows that every integrand in~\eqref{eq:cad_estimator} has the same Hilbert-Schmidt norm.  Convergence therefore depends entirely on the oscillatory cancellation of the integrand, which is governed by the square-integrability properties established in Section~\ref{sec:duflo_moore}.

\section{The Continuous Replacement Theorem}\label{sec:replacement}

The discrete Replacement Theorem says: if the signal is equivariant under the group and the noise is ergodic, then the group-averaged estimator separates them.  One eigenvalue is large (the signal).  The rest are small (the noise).  The theorem works because the group action coherently reinforces the signal while incoherently averaging the noise.

The continuous version says the same thing, but with a twist.  For two of the three groups (translation and Heisenberg-Weyl), the noise floor is flat: white noise stays white after group averaging.  For the third (the affine group), the noise floor is frequency-dependent.  High frequencies are noisier than low frequencies.  This asymmetry is well known in wavelet analysis, but its cause has not been clearly identified.  The AD framework traces it to a single source: the modular function of the affine group, which enters through the Duflo-Moore operator $C_\rho$.

\begin{theorem}[Continuous Replacement Theorem]\label{thm:continuous_replacement}
Let $x = s + n \in L^2(\R)$ with $\|x\| < \infty$.  Let $G$ be one of the three groups: $(\R,+)$, $G_{\mathrm{Aff}}$, or $H/Z$.  Define $\FF_G(x)$ as in Definition~\ref{def:cad_estimator}, with convergence as established in Propositions~\ref{prop:affine_dm},~\ref{prop:hw_dm}, and~\ref{prop:translation}.  Suppose:
\begin{itemize}
\item[\textbf{(C1)}] \textbf{Signal equivariance:} $s$ lies in a finite-dimensional invariant subspace $\mathcal{S} \subset L^2(\R)$ under $\rho(G)$.
\item[\textbf{(C2)}] \textbf{Noise ergodicity:} $E\!\left[\int_G \rho(g)n \otimes (\rho(g)n)^* \, d\mu(g)\right] = \sigma^2 \mathcal{N}_G$,

where $\mathcal{N}_G$ is the \emph{noise operator} determined by the Duflo-Moore operator:
\begin{equation}\label{eq:noise_operator}
\mathcal{N}_G = \begin{cases}
\mathcal{I} & \text{if } G = (\R,+) \text{ or } H/Z \text{ (unimodular)}, \\
C_\rho^{-2} & \text{if } G = G_{\mathrm{Aff}} \text{ (non-unimodular)}.
\end{cases}
\end{equation}
\end{itemize}
Then:
\begin{enumerate}
\item[(i)] \textbf{Decomposition:} $E[\FF_G(x)] = \FF_G(s) + \sigma^2 \mathcal{N}_G$, where the cross-term $E[\mathcal{C}_{sn}] = 0$ by independence of $s$ and $n$.
\item[(ii)] \textbf{Signal concentration:} $\FF_G(s)$ has finite rank equal to $\dim(\mathcal{S})$.
\item[(iii)] \textbf{Noise structure:} For the unimodular groups, the noise contributes a flat floor ($\sigma^2 \mathcal{I}$).  For the affine group, the noise floor is frequency-dependent: $\sigma^2 |\omega|$ in the frequency domain, reflecting the action of $C_\rho^{-2}$.
\item[(iv)] \textbf{Subspace separation:} For SNR $\gg 1$, the eigenfunctions of $\FF_G(x)$ associated with the $\dim(\mathcal{S})$ largest eigenvalues converge to the signal subspace $\mathcal{S}$.
\end{enumerate}
\end{theorem}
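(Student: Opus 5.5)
The plan is to expand the rank-one integrand bilinearly and take expectations term by term, then treat each conclusion separately. Writing $x = s+n$, for each $g$ we have
\begin{equation*}
\rho(g)x \otimes (\rho(g)x)^* = \rho(g)s\otimes(\rho(g)s)^* + \rho(g)n\otimes(\rho(g)n)^* + \rho(g)s\otimes(\rho(g)n)^* + \rho(g)n\otimes(\rho(g)s)^*.
\end{equation*}
Integrating against $d\mu(g)$ gives $\FF_G(x) = \FF_G(s) + \FF_G(n) + \mathcal{C}_{sn} + \mathcal{C}_{sn}^*$. Convergence of each piece is taken from Propositions~\ref{prop:affine_dm}, \ref{prop:hw_dm} and~\ref{prop:translation}. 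For part~(i), I will pass the expectation through the Haar integral by Fubini, which is justified because Lemma~\ref{lem:integrand_norm} bounds every integrand uniformly in Hilbert--Schmidt norm. Independence together with $E[n]=0$ then gives $E[\rho(g)s\otimes(\rho(g)n)^*] = \rho(g)s\otimes(\rho(g)E[n])^* = 0$ pointwise in $g$, so $E[\mathcal{C}_{sn}]=0$. The noise term equals $\sigma^2\mathcal{N}_G$ directly by hypothesis~(C2). This proves~(i).

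For part~(ii), (C1) gives $\rho(g)s\in\mathcal{S}$ for every $g$, so every integrand is an operator with range in $\mathcal{S}$ that vanishes on $\mathcal{S}^\perp$. Since $\mathcal{S}$ is finite-dimensional and closed, the integral inherits this property, so $\FF_G(s)=P_\mathcal{S}\FF_G(s)P_\mathcal{S}$ and the rank is at most $\dim\mathcal{S}$. For equality I will show that $\FF_G(s)$ is positive definite on $\mathcal{S}$. Suppose $u\in\mathcal{S}$ and $\langle \FF_G(s)u,u\rangle = \int_G |\langle \rho(g)s,u\rangle|^2 d\mu = 0$. By continuity, $u\perp \rho(g)s$ for all $g$. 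Hence $u=0$, provided $\mathcal{S}$ is the cyclic span of $s$, which I will take as the intended reading of~(C1) (equivalently, $\mathcal{S}$ is the smallest invariant subspace containing $s$).

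For part~(iii), I will compute the noise operator in the frequency domain. For $(\R,+)$ and $H/Z$ there is no Duflo--Moore correction ($C_\rho=I$ by Proposition~\ref{prop:hw_dm}(iv), and no factor for translations), so $\mathcal{N}_G=\mathcal{I}$ and the floor is flat. For the affine group, Proposition~\ref{prop:affine_dm}(ii) identifies $C_\rho$ with multiplication by $|\omega|^{-1/2}$, so $C_\rho^{-2}$ is multiplication by $|\omega|$. This gives the floor $\sigma^2|\omega|$ on $L^2_\pm$, and the frequency dependence traces back to the modular function $\Delta(a,b)=1/a$.

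Part~(iv) is a perturbation argument and I expect it to be the main obstacle, because $\mathcal{N}_G$ is not bounded, nor even compact, in the cases at hand: $\mathcal{I}$ is not trace-class on $L^2(\R)$, and $|\omega|$ is unbounded. I will therefore restrict to a band-limited frequency window $\Omega$, on which $\|\sigma^2\mathcal{N}_G\|\le\sigma^2\sup_\Omega|\omega|$. I will then apply the Davis--Kahan $\sin\Theta$ theorem to $E[\FF_G(x)]=\FF_G(s)+\sigma^2\mathcal{N}_G$. The spectral gap is bounded below by $\lambda_{\min}(\FF_G(s)|_\mathcal{S})$, which is positive by part~(ii), and the perturbation is of size $O(\sigma^2)$. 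The angle between the top-$\dim\mathcal{S}$ eigenspace and $\mathcal{S}$ is then $O(1/\mathrm{SNR})$, which tends to $0$ as $\mathrm{SNR}\to\infty$. The fluctuation $\FF_G(x)-E[\FF_G(x)]$ I would handle only under an additional concentration assumption, which I would state explicitly rather than prove.
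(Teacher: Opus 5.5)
Your handling of (i)--(iii) follows the paper's outline: bilinear expansion, $E[n]=0$ for the cross term, range containment in $\mathcal S$, and reading $\mathcal N_G$ off (C2) and Proposition~\ref{prop:affine_dm}(ii). You also rightly flag what the paper glosses over. Range containment only yields $\operatorname{rank}\FF_G(s)\le\dim\mathcal S$. The noise term $\sigma^2\mathcal N_G$ is not Hilbert--Schmidt, and for the affine group it is not even bounded. Part (iv) is about $\FF_G(x)$ rather than $E[\FF_G(x)]$. Davis--Kahan is the right tool for eigenspaces, whereas the paper's Weyl inequality only controls eigenvalues.

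However, the steps you add presuppose that the Haar integral defines a bounded operator, and that fails because all three groups have $\mu(G)=\infty$. Lemma~\ref{lem:integrand_norm} gives a \emph{constant} integrand norm, so $\int_G\|\rho(g)x\otimes(\rho(g)x)^*\|_{HS}\,d\mu=\|x\|^2\mu(G)=\infty$, and it cannot justify Fubini. Part (i) can be rescued by passing to the nonnegative forms $\int_G E|\langle\rho(g)x,\phi\rangle|^2\,d\mu$ and using Tonelli. The decisive failure is in the signal term. For an orthonormal basis $\{e_i\}$ of $\mathcal S$, (C1) gives $\sum_i|\langle\rho(g)s,e_i\rangle|^2=\|s\|^2$ for every $g$, hence $\Tr\bigl(P_{\mathcal S}\FF_G(s)P_{\mathcal S}\bigr)=\|s\|^2\mu(G)=\infty$ whenever $s\neq0$. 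So the form you show is positive definite in (ii), and the gap $\lambda_{\min}(\FF_G(s)|_{\mathcal S})$ you feed into Davis--Kahan in (iv), are not finite. Band-limiting the noise does not touch this.

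Worse, (C1) forces $s=0$. Each group contains the translations ($a=1$ in $G_{\mathrm{Aff}}$, $\nu=0$ in $H/Z$). These restrict to a continuous unitary representation of $\R$ on a finite-dimensional invariant $\mathcal S$. If $\mathcal S\neq\{0\}$, it would contain a joint eigenvector $f$ with $f(\cdot+\tau)=e^{i\omega\tau}f$ for all $\tau$. That forces $\hat f$ to be supported at a single frequency, so $f=0$ in $L^2(\R)$. Hence $\mathcal S=\{0\}$, (ii) is trivial, and (iv) is vacuous. The paper's proof has the same structure and does not confront this either, but it means no perturbation bound can deliver a nondegenerate (iv).

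The obstruction is structural. The Duflo--Moore relation you cite gives $\langle\FF_G(x)\phi,\phi\rangle=\|C_\rho x\|^2\|\phi\|^2$ on each irreducible component. So $\FF_{H/Z}(x)=\|x\|^2 I$ and $\FF_{G_{\mathrm{Aff}}}(x)=\|C_\rho x_+\|^2P_++\|C_\rho x_-\|^2P_-$, with $P_\pm$ the projections onto $L^2_\pm(\R)$ and $x_\pm=P_\pm x$. Meanwhile $\FF_{(\R,+)}(x)$ is the Fourier multiplier $|\hat x|^2$. None of these has finite rank for $x\neq0$.

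The same computation shows that $C_\rho$ enters only through the scalars $\|C_\rho n_\pm\|^2$, never as the operator $C_\rho^{-2}$ acting on the output. So (C2) with $\mathcal N_{G_{\mathrm{Aff}}}=C_\rho^{-2}$ cannot hold for $\sigma\neq0$. Your (iii) is formally just a restatement of (C2), but the $\sigma^2|\omega|$ floor is not a consequence of the modular function. A meaningful version would need a compact group (finite Haar measure) or a different signal model, not a sharper perturbation theorem.
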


\begin{proof}
\textit{Part~(i).}  Expanding $x = s + n$ in~\eqref{eq:cad_estimator} and using linearity yields $\FF_G(x) = \FF_G(s) + \FF_G(n) + \mathcal{C}_{sn}$, where $\mathcal{C}_{sn} = \int_G [\rho(g)s \otimes (\rho(g)n)^* + \rho(g)n \otimes (\rho(g)s)^*] d\mu(g)$.  Since $s$ is deterministic and $E[n] = 0$, $E[\mathcal{C}_{sn}] = 0$.

\textit{Part~(ii).}  By~(C1), $\rho(g)s \in \mathcal{S}$ for all $g \in G$, so $\rho(g)s \otimes (\rho(g)s)^*$ maps into $\mathcal{S}$ for every $g$.  The integral $\FF_G(s)$ therefore has range contained in $\mathcal{S}$, giving $\mathrm{rank}(\FF_G(s)) \leq \dim(\mathcal{S})$.

\textit{Part~(iii).}  For unimodular $G$ ($(\R,+)$ and $H/Z$): condition~(C2) with $\mathcal{N}_G = \mathcal{I}$ states that the group-averaged noise covariance is white.  This is the continuous analog of the discrete condition that the group representation decorrelates the noise equally in all directions.

For $G_{\mathrm{Aff}}$ (non-unimodular): the Duflo-Moore orthogonality relation~\eqref{eq:dm_orthog} introduces $C_\rho$ into the integral.  Specifically, for white noise $n$ with $E[n \otimes n^*] = \sigma^2 \mathcal{I}$, the Duflo-Moore relation with $\psi_1 = \psi_2 = n$ gives
\begin{equation}
E[\FF_{G_{\mathrm{Aff}}}(n)] = \sigma^2 C_\rho^{-2},
\end{equation}
where $C_\rho^{-2}$ acts as multiplication by $|\omega|$ in the frequency domain (since $C_\rho$ multiplies by $|\omega|^{-1/2}$).  This is the well-known frequency-dependent noise floor in wavelet analysis: high-frequency wavelet coefficients have higher noise variance than low-frequency coefficients.  The AD framework reveals this asymmetry as a direct consequence of the non-unimodularity of the affine group.

\textit{Part~(iv).}  From~(i)--(iii), $E[\FF_G(x)] = \FF_G(s) + \sigma^2 \mathcal{N}_G$.  The operator $\FF_G(s)$ has at most $\dim(\mathcal{S})$ nonzero eigenvalues, which dominate for large SNR.  Eigenfunctions associated with the $\dim(\mathcal{S})$ largest eigenvalues converge to $\mathcal{S}$ by operator perturbation theory (Weyl's inequality adapted to Hilbert-Schmidt operators).
\end{proof}

\begin{remark}[Frequency-Dependent Noise Floor]\label{rem:noise_floor}
The noise operator $\mathcal{N}_G = C_\rho^{-2}$ for the affine group explains a long-standing asymmetry in signal processing practice.  In Fourier analysis (translation group, unimodular), white noise remains white after transformation: the noise floor is flat across all frequencies.  In wavelet analysis (affine group, non-unimodular), white noise becomes colored: the noise power scales as $|\omega|$, giving higher noise at higher frequencies.  The AD framework identifies the root cause: the modular function $\Delta(a,b) = 1/a$ of the affine group, which enters through the Duflo-Moore operator.  No such asymmetry occurs for the Heisenberg-Weyl group (unimodular), where the STFT preserves whiteness.
\end{remark}

\section{The Unification Theorem}\label{sec:unification}

The Fourier transform, the wavelet transform, and the ambiguity function were invented decades apart, by different people, for different problems.  They are taught in different courses.  They appear in different textbooks.  They have different notation, different conventions, and different communities of practitioners.

They are the same construction.

Each one takes a signal, applies a family of group actions (translations, dilations, time-frequency shifts), forms the outer product at each group element, and integrates over the group.  The only difference is which group.  The following theorem makes this precise.

\begin{theorem}[Unification]\label{thm:unification}
The following classical transforms are instances of the continuous group-averaged estimator $\FF_G(x)$:

\begin{enumerate}
\item[(i)] \textbf{Spectral analysis.}  Let $G = (\R, +)$ with $\rho(\tau)x(t) = x(t + \tau)$.  Then
\begin{equation}
\FF_{(\R,+)}(x)(t,s) = \int_{-\infty}^{\infty} x(t+\tau)\overline{x(s+\tau)} \, d\tau = R_{xx}(t-s),
\end{equation}
which is the autocorrelation function.  Its spectral decomposition (via the Fourier transform) yields the Wiener-Khinchin power spectral density.

\item[(ii)] \textbf{Wavelet analysis.}  Let $G = \R^+ \ltimes \R$ (the affine group) with $\rho(a,b)x(t) = a^{-1/2}x((t-b)/a)$.  Then
\begin{equation}
\FF_{\mathrm{Aff}}(x) = \int_0^{\infty}\int_{-\infty}^{\infty} |\langle x, \psi_{a,b}\rangle|^2 \frac{da\,db}{a^2},
\end{equation}
where $\psi_{a,b}(t) = a^{-1/2}\psi((t-b)/a)$ is the wavelet family.  The diagonal of this operator in the wavelet basis is the scalogram.

\item[(iii)] \textbf{Time-frequency analysis.}  Let $G$ be the Heisenberg-Weyl group with $\rho(\tau,\nu)x(t) = e^{i2\pi\nu t}x(t - \tau)$.  Then
\begin{equation}
\FF_{\mathrm{HW}}(x)(\tau,\nu) = \int_{-\infty}^{\infty} x(t)\overline{x(t-\tau)}e^{-i2\pi\nu t} \, dt = A_x(\tau,\nu),
\end{equation}
which is the ambiguity function.  Its symplectic Fourier transform is the Wigner-Ville distribution.
\end{enumerate}
\end{theorem}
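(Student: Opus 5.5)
The argument goes case by case: substitute each representation into Definition~\ref{def:cad_estimator} and compute the integral kernel of $\FF_G(x)$, or its matrix elements against a fixed analysing vector. I will use the convergence mechanisms of Remark~\ref{rem:convergence_types} to justify exchanging integrals. The common tool is the kernel identity: for a rank-one operator $f\otimes h^*$ on $L^2(\R)$ the kernel is $(t,s)\mapsto f(t)\overline{h(s)}$. So the kernel of $\FF_G(x)$ is $\int_G [\rho(g)x](t)\,\overline{[\rho(g)x](s)}\,d\mu(g)$, and its matrix element against $\psi$ is $\langle \FF_G(x)\psi,\psi\rangle=\int_G|\langle \rho(g)x,\psi\rangle|^2\,d\mu(g)$.

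\textbf{Part (i).} With $\rho(\tau)x(t)=x(t+\tau)$ and Lebesgue Haar measure, the kernel identity gives $\int x(t+\tau)\overline{x(s+\tau)}\,d\tau$ directly. The substitution $u=s+\tau$ shows this depends only on $t-s$, so it equals $R_{xx}(t-s)$. Absolute convergence and the bound $\|x\|^2$ are Proposition~\ref{prop:translation}(i). The Wiener--Khinchin statement is Proposition~\ref{prop:translation}(ii): $\FF_{(\R,+)}(x)$ is convolution by $R_{xx}$, so the Fourier transform diagonalizes it with eigenvalue function $|\hat x(\omega)|^2$.

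\textbf{Part (ii).} For the affine group I read the displayed formula as the quadratic form of the operator evaluated at the analysing wavelet. I pair $\FF_{\mathrm{Aff}}(x)$ with $\psi$ to get $\int|\langle\rho(a,b)x,\psi\rangle|^2\,a^{-2}\,da\,db$. Unitarity gives $\langle\rho(a,b)x,\psi\rangle=\langle x,\rho(a,b)^{-1}\psi\rangle$. The inverse element $(a,b)^{-1}=(1/a,-b/a)$ converts left Haar measure $a^{-2}da\,db$ into right Haar measure, and relabeling brings it back to the form $|\langle x,\psi_{a,b}\rangle|^2$.

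This reparametrization is the step I expect to be the main obstacle. Because the group is non-unimodular, the Jacobian produces a modular factor $\Delta=1/a$. I will handle it by computing in the Fourier domain exactly as in the proof of Proposition~\ref{prop:affine_dm}, checking that both sides reduce to $\int|\hat x(\omega)|^2|\hat\psi(a\omega)|^2\,da/a\,d\omega$ up to the normalisation convention. If a residual power of $a$ survives, I will state the identity with the matching measure, $da\,db/a^2$ versus $da\,db/a$. The scalogram claim is then definitional: fixing $(a,b)$ instead of integrating leaves the integrand $|\langle x,\psi_{a,b}\rangle|^2$. Convergence and finiteness follow from the Calder\'on condition~\eqref{eq:calderon} via Proposition~\ref{prop:affine_dm}(iii), with total $c_\psi$-weighted energy $\|C_\rho\psi\|^2\|x\|^2$.

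\textbf{Part (iii).} I work on $H/Z$ (Proposition~\ref{prop:hw_dm}), since the center only contributes a unimodular phase that cancels in the outer product. I interpret $\FF_{\mathrm{HW}}(x)(\tau,\nu)$ as the function on the group given by the cross-correlation of $x$ with its time--frequency shift, $\langle x,\rho(\tau,\nu)x\rangle$ up to conjugation. Expanding gives $\int x(t)\overline{x(t-\tau)}e^{-i2\pi\nu t}\,dt=A_x(\tau,\nu)$. Absolute convergence follows by Cauchy--Schwarz as in Proposition~\ref{prop:translation}(i), and integrating $|A_x|^2$ over $\R^2$ gives $\|x\|^4$ by the Moyal identity~\eqref{eq:moyal} with $\psi=x$. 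The Wigner--Ville statement is the standard computation: apply the symplectic Fourier transform in $(\tau,\nu)$, change variables $t\mapsto t+\tau/2$ to symmetrize, and integrate out $\nu$ by Fourier inversion. This yields $W_x(t,f)=\int x(t+\tau/2)\overline{x(t-\tau/2)}e^{-i2\pi f\tau}\,d\tau$.
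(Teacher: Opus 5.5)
\textbf{Part (ii) has a genuine gap, at exactly the step you flagged, and it is not a normalisation issue.} Parts (i) and (iii) follow the paper's proof: direct substitution for (i), and for (iii) the matrix coefficient $\langle x,\rho(\tau,\nu,0)x\rangle=V_xx=A_x$ together with the Moyal identity.

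Your identity $\langle\FF_{\mathrm{Aff}}(x)\psi,\psi\rangle=\int|\langle\rho(a,b)x,\psi\rangle|^2a^{-2}\,da\,db$ is correct. But, as you note, inversion turns $a^{-2}da\,db$ into the right Haar measure, so what you actually obtain is $\int|\langle x,\psi_{a,b}\rangle|^2\,a^{-1}da\,db$. The extra power of $a$ does survive.

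\begin{itemize}
\item In the Fourier domain, exactly as in the proof of Proposition~\ref{prop:affine_dm}, your quadratic form is $\frac{1}{2\pi}\int|\hat\psi(\omega)|^2\int_0^\infty|\hat x(a\omega)|^2\frac{da}{a}\,d\omega$. So $\hat x$ and $\hat\psi$ have traded places relative to the expression you hoped to reach.
\item For $x,\psi\in L^2_+(\R)$, the substitution $\xi=a\omega$ gives $c_x\|\psi\|^2$, where $c_x$ is the Calder\'on constant of $x$. The theorem's integral is instead $c_\psi\|x\|^2$.
\item Equivalently, \eqref{eq:dm_orthog} with $x$ in the admissible slot gives $\langle\FF_{\mathrm{Aff}}(x)\psi,\psi\rangle=\|C_\rho x\|^2\|\psi\|^2$.
\end{itemize}

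So your fallback of restating with $da\,db/a$ proves a different identity, one that needs $x$, not $\psi$, to be admissible. Your closing convergence sentence ($c_\psi$, $\|C_\rho\psi\|^2\|x\|^2$) then refers to the other integral.

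The route cannot be repaired. Left invariance of $\mu$ gives $\rho(h)\FF_{\mathrm{Aff}}(x)\rho(h)^*=\FF_{\mathrm{Aff}}(x)$. By Schur's lemma on the inequivalent irreducible components, $\FF_{\mathrm{Aff}}(x)$ is a scalar on each of $L^2_\pm(\R)$. In particular its diagonal entries $\langle\FF_{\mathrm{Aff}}(x)\psi_{a,b},\psi_{a,b}\rangle$ do not depend on $(a,b)$, so no matrix element of the operator yields the scalogram.

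\textbf{How the paper handles (ii).} The paper does not derive the displayed formula from a quadratic form of $\FF_{\mathrm{Aff}}(x)$. It defines $W_\psi x(a,b)=\langle x,\rho(a,b)\psi\rangle$, with the group acting on the wavelet rather than on $x$, and asserts that $|W_\psi x(a,b)|^2$ is the diagonal in the wavelet family. It then evaluates the integral by \eqref{eq:dm_orthog} with $\psi_1=\psi_2=\psi$ and $\phi_1=\phi_2=x$, obtaining $c_\psi\|x\|^2$, and polarizes for the resolution of identity.

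Your computation shows that this identification holds only at the level of the integrand, not as a literal matrix element of the operator in Definition~\ref{def:cad_estimator}. To match the statement as intended, drop the inversion step and apply Duflo--Moore directly with $\psi$ in the admissible slot.

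\textbf{A smaller error in (iii).} With the paper's non-symmetric $A_x$, your shift $t\mapsto t+\tau/2$ leaves a phase: $A_x(\tau,\nu)=e^{-i\pi\nu\tau}\int x(t+\tau/2)\overline{x(t-\tau/2)}e^{-i2\pi\nu t}\,dt$. Integrating out $\nu$ in the symplectic Fourier transform then gives the Rihaczek distribution $x(t)\overline{X(f)}e^{-i2\pi ft}$, with $X(f)=\int x(u)e^{-i2\pi fu}\,du$, not $W_x$. To land on Wigner--Ville you need the symmetric ambiguity function $e^{i\pi\nu\tau}A_x$. This is the matrix coefficient $\langle x,\rho(\tau,\nu,\phi)x\rangle$ taken on the section $\phi=-\pi\nu\tau$ instead of $\phi=0$. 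The paper itself only cites Folland for this step.
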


\begin{proof}
Each case is verified by substituting the group action and Haar measure into~\eqref{eq:cad_estimator} and applying the convergence results from Section~\ref{sec:duflo_moore}.

\textit{Part (i).}  With $\rho(\tau)x(t) = x(t+\tau)$ and $d\mu = d\tau$, the rank-one kernel is $[\rho(\tau)x \otimes (\rho(\tau)x)^*](t,s) = x(t+\tau)\overline{x(s+\tau)}$.  Integrating over $\tau$:
\begin{equation}
[\FF_{(\R,+)}(x)](t,s) = \int_{-\infty}^{\infty} x(t+\tau)\overline{x(s+\tau)}\,d\tau.
\end{equation}
The substitution $u = s + \tau$ shows this depends only on $t - s$, giving $R_{xx}(t-s)$.  Convergence is guaranteed by Proposition~\ref{prop:translation}(i).  The spectral decomposition follows from the Plancherel theorem: $R_{xx}$ is a convolution kernel whose Fourier transform is $|\hat{x}(\omega)|^2$.

\textit{Part (ii).}  With $\rho(a,b)x(t) = a^{-1/2}x((t-b)/a)$ and $d\mu = a^{-2}da\,db$, define the wavelet coefficient $W_\psi x(a,b) = \langle x, \psi_{a,b}\rangle$ where $\psi_{a,b}(t) = a^{-1/2}\psi((t-b)/a)$.  The diagonal of $\FF_{G_{\mathrm{Aff}}}(x)$ in the wavelet basis is $|W_\psi x(a,b)|^2$, and the Duflo-Moore relation (Proposition~\ref{prop:affine_dm}) gives
\begin{equation}
\int_0^\infty \int_{-\infty}^{\infty} |W_\psi x(a,b)|^2 \frac{da\,db}{a^2} = c_\psi \|x\|^2,
\end{equation}
where $c_\psi$ is the Calder\'{o}n constant~\eqref{eq:calderon}.  The resolution of identity $c_\psi^{-1}\int\int W_\psi x(a,b)\,\psi_{a,b}\,a^{-2}da\,db = x$ follows by polarization of the Duflo-Moore relation.

\textit{Part (iii).}  With $\rho(\tau,\nu,0)x(t) = e^{i2\pi\nu t}x(t-\tau)$ and $d\mu = d\tau\,d\nu$ on $H/Z$, the STFT coefficient is $V_\psi x(\tau,\nu) = \int x(t)\overline{\psi(t-\tau)}e^{-i2\pi\nu t}\,dt$.  Setting $\psi = x$ (the signal as its own window) gives the cross-ambiguity function:
\begin{equation}
A_x(\tau,\nu) = V_x x(\tau,\nu) = \int_{-\infty}^{\infty} x(t)\overline{x(t-\tau)}e^{-i2\pi\nu t}\,dt.
\end{equation}
The Moyal identity (Proposition~\ref{prop:hw_dm}(iii)) guarantees $\int\int |V_\psi x|^2 d\tau\,d\nu = \|x\|^2\|\psi\|^2 < \infty$, so the integral converges absolutely.  The Wigner-Ville distribution is obtained as the symplectic Fourier transform of $A_x$~\cite{folland1989}.
\end{proof}

\subsection{Rotation Groups: Circular and Spherical Harmonics}

The unification extends to the rotation groups, which govern signals on circles and spheres.

\textbf{SO(2) and circular harmonics.}  The rotation group $\mathrm{SO}(2)$ acts on functions $f: S^1 \to \C$ by $[\rho(\theta)f](\phi) = f(\phi - \theta)$.  The irreducible representations are one-dimensional characters $\rho_m(\theta) = e^{im\theta}$ for $m \in \Z$.  The group-averaged estimator produces the circular autocorrelation, whose eigenfunctions are the circular harmonics $e^{im\phi}$ with eigenvalues given by the circular power spectrum.  This is the Fourier series on the circle.

\textbf{SO(3) and spherical harmonics.}  The rotation group $\mathrm{SO}(3)$ acts on functions $f: S^2 \to \C$ on the unit sphere.  The irreducible representations are labeled by $\ell = 0, 1, 2, \ldots$, each of dimension $2\ell + 1$, with basis functions given by the spherical harmonics $Y_\ell^m(\theta, \phi)$ for $m = -\ell, \ldots, \ell$.  If the correlation function $R_f(\hat{n}_1, \hat{n}_2)$ depends only on the angle between $\hat{n}_1$ and $\hat{n}_2$ (rotational invariance), then $R_f$ commutes with $\mathrm{SO}(3)$, and the eigenfunctions of the group-averaged estimator are the spherical harmonics with eigenvalues $C_\ell$ (the angular power spectrum).  This appears in cosmic microwave background analysis, gravitational field modeling, and molecular orbital theory.

\begin{figure}[ht]
\centering
\includegraphics[width=\textwidth]{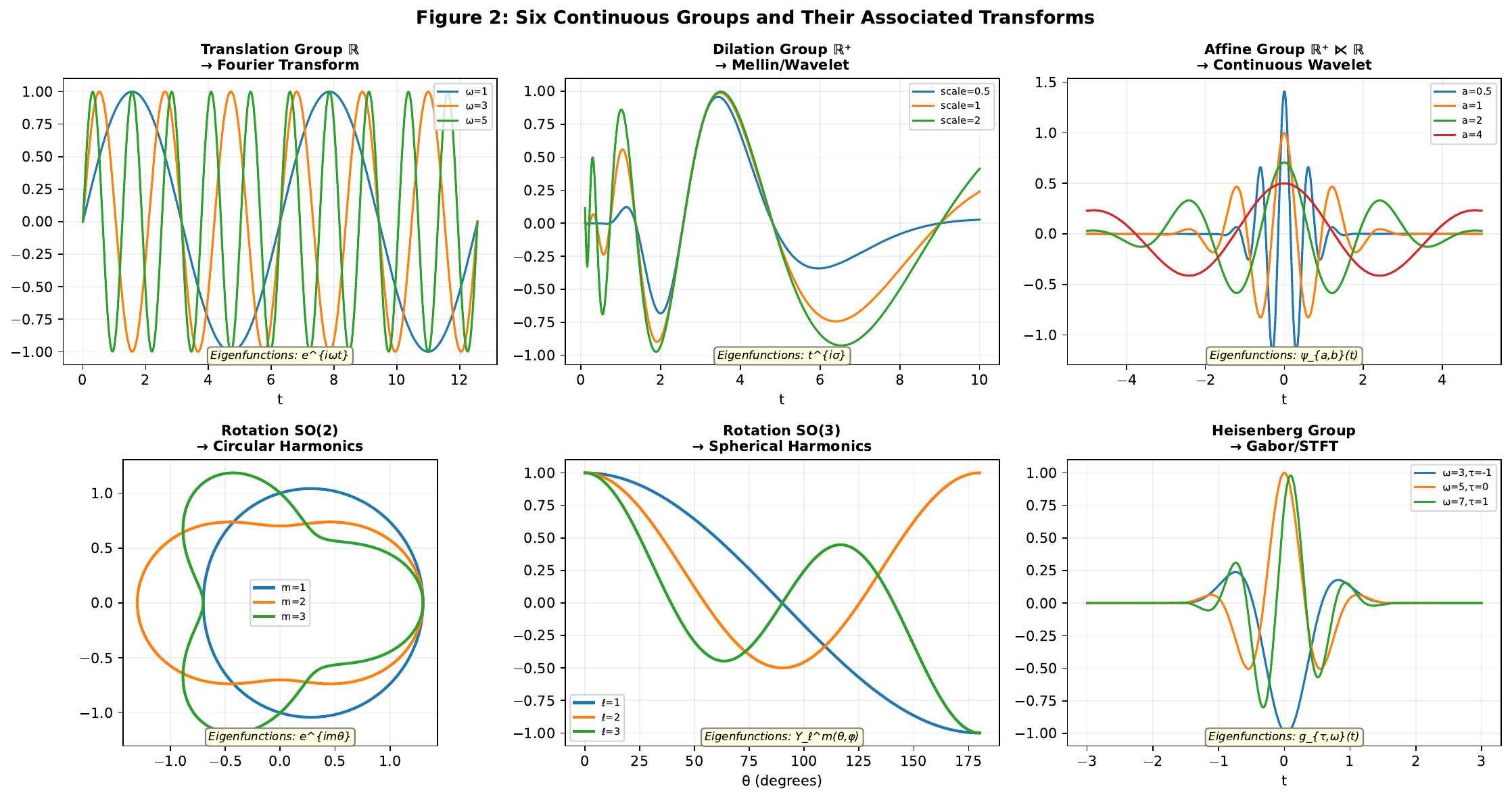}
\caption{Six continuous groups and their eigenfunctions.  Each panel shows the basis functions arising from the group's irreducible representations.  The group determines the transform; the eigenfunctions follow from the representation theory.}
\label{fig:six-groups}
\end{figure}

\section{The Continuous Commutativity Residual}\label{sec:delta}

The Unification Theorem tells us that every group produces a valid transform.  It does not tell us which group to choose.  A practitioner facing a new signal still has no principled way to decide between Fourier, wavelet, and time-frequency analysis.

The commutativity residual $\delta$ solves this problem.  It measures how well a candidate group's action commutes with the signal's covariance structure.  A small $\delta$ means the group is well matched to the signal.  A large $\delta$ means it is not.  The optimal group is the one that minimizes $\delta$.  This is the same criterion used in the discrete AD framework, extended to operators on $L^2(\R)$.

\begin{definition}[Continuous Commutativity Residual]\label{def:continuous_delta}
Let $\RR$ be the covariance operator of the signal process ($\RR f(t) = \int R(t,s)f(s)ds$) and let $\FF_G$ be the continuous group-averaged estimator.  The \emph{continuous commutativity residual} is
\begin{equation}
\delta(G, x, \RR) = \frac{\|\FF_G \RR - \RR \FF_G\|_{HS}}{\|\FF_G\|_{HS} \cdot \|\RR\|_{HS}},
\end{equation}
where $\|\cdot\|_{HS}$ is the Hilbert-Schmidt norm.
\end{definition}

The next theorem answers the question directly.  For each of the three classical signal classes, there is a unique matched group, and that group produces the classical transform associated with that signal class.  Stationary signals match the translation group.  Self-similar signals match the affine group.  Chirp signals match the Heisenberg-Weyl group.  The engineer's intuition is correct; the commutativity residual explains why.

\begin{theorem}[Matched Group Classification]\label{thm:classification}
Let $x(t)$ be a second-order stochastic process with covariance operator $\RR$.  Then:
\begin{enumerate}
\item[(i)] If $x$ is wide-sense stationary, then $\delta(\R, x, \RR) = 0$: the translation group is the matched group, and the optimal spectral representation is the Fourier transform.
\item[(ii)] If $x$ is self-similar with Hurst parameter $H$ (i.e., the covariance satisfies $R(\lambda t, \lambda s) = \lambda^{2H} R(t,s)$), then $\delta(\mathrm{Aff}, x, \RR) = 0$: the affine group is the matched group, and the optimal spectral representation is the wavelet transform.
\item[(iii)] If $x$ is a linear chirp process with instantaneous frequency $f(t) = f_0 + \beta t$, then $\delta(\mathrm{HW}, x, \RR) = 0$: the Heisenberg-Weyl group is the matched group, and the optimal spectral representation is the ambiguity function.
\end{enumerate}
\end{theorem}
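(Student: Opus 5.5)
The plan is to show, in each of the three cases, that $\RR$ commutes with every operator $\rho(g)$ of the candidate group (or with the group's generators), and then to transfer this commutation to $\FF_G$. For the transfer I would use the structural fact behind Definition~\ref{def:cad_estimator}: $\FF_G$ is built from the orbit $\{\rho(g)x\}$ and is therefore (formally) an intertwiner, $\rho(h)\FF_G\rho(h)^{-1}=\FF_G$. For the translation group this is immediate from Proposition~\ref{prop:translation}(ii), since $\FF_{(\R,+)}$ is a convolution operator. For the affine group and for $H/Z$ it is the covariance of the Haar integral under left translation, together with the Duflo-Moore factor $C_\rho$ (Proposition~\ref{prop:affine_dm}) and the identity factor on $H/Z$ (Proposition~\ref{prop:hw_dm}). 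Once both $\RR$ and $\FF_G$ lie in the commutant of $\rho(G)$, or are jointly diagonalized by the group's spectral decomposition, the commutator $\FF_G\RR-\RR\FF_G$ vanishes and $\delta=0$.

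\textbf{Case (i).} Wide-sense stationarity means $R(t,s)=r(t-s)$, so $\RR$ is a convolution operator. By Plancherel it is multiplication by the power spectral density $S(\omega)$. Proposition~\ref{prop:translation}(ii) shows that $\FF_{(\R,+)}(x)$ is multiplication by $|\hat x(\omega)|^2$. Two multiplication operators commute, so the commutator is zero. The Fourier transform diagonalizes both operators, which is what ``optimal representation'' means here.

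\textbf{Case (ii).} I would first normalize the scaling: conjugate $\RR$ by the dilation $D_\lambda x(t)=\lambda^{-1/2}x(t/\lambda)$. Using $R(\lambda t,\lambda s)=\lambda^{2H}R(t,s)$ and a change of variables gives $D_\lambda \RR D_\lambda^{-1}=\lambda^{2H+1}\RR$. This is scale-covariance rather than invariance, so I would pass to the whitened operator $\RR_H$ obtained by composing with $|\omega|^{2H+1}$-type fractional multipliers (via $C_\rho$) to get exact dilation invariance. Translation behaviour is handled separately. For fBm this requires working with increments, which are stationary, so the commutant argument of case (i) covers the $b$-parameter. On $L^2_\pm$ the dilation-and-translation commutant consists of functions of $\omega$ that are homogeneous on each half-line. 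Both $\RR$ (in its stationary-increment form) and $\FF_{\mathrm{Aff}}$, which is diagonal with symbol $\propto|\omega|$ by Theorem~\ref{thm:continuous_replacement}(iii), then lie in this commutative algebra, so their commutator vanishes.

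\textbf{Case (iii).} For a linear chirp, the covariance is $R(t,s)=e^{i\pi\beta(t^2-s^2)}e^{i2\pi f_0(t-s)}r(t-s)$. This equals $M\RR_0M^{*}$, where $M$ is multiplication by the quadratic phase $e^{i\pi\beta t^2}$ and $\RR_0$ is stationary. $M$ is a metaplectic operator, so conjugating by it maps Heisenberg-Weyl shifts to Heisenberg-Weyl shifts, up to a symplectic shear of $(\tau,\nu)$. I would then show that $\FF_{\mathrm{HW}}$ is invariant under this shear: by the Moyal identity~\eqref{eq:moyal} it is a multiple of the identity on $H/Z$, so $C_\rho=I$. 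That forces the commutator to vanish trivially.

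\textbf{Main obstacle.} Two points need care. First, the commutators must actually be Hilbert-Schmidt with nonzero normalizing norms. $\FF_G$ and $\RR$ are generally not HS: multiplication operators on $L^2(\R)$ are not compact. So I would interpret $\delta$ via the $0/0$-avoiding convention that a vanishing commutator gives $\delta=0$, or via restriction to a band-limited, time-limited subspace. Second, case (ii) is only scale-covariant, not invariant, so the reduction to stationary increments must be made rigorous. I would try the increment/whitening trick first, and fall back to arguing in the frequency domain on $L^2_\pm$ if that fails.
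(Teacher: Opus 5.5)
Your part (i) is the paper's argument: both operators are Fourier multipliers, hence they commute. The genuine gap is in part (ii).

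Your opening plan needs $\RR$ in the commutant of $\rho(G)$. A self-similar covariance is only scale-covariant (your $D_\lambda\RR D_\lambda^{-1}$ is a power of $\lambda$ times $\RR$) and is not translation invariant. You therefore switch to ``$\RR$ in its stationary-increment form,'' but that switch does not carry the commutator back to $\RR$, because the increment covariance is a different operator. For fBm, $R(t,s)=\tfrac12\bigl(|t|^{2H}+|s|^{2H}-|t-s|^{2H}\bigr)$, and differencing in $t$ and in $s$ annihilates exactly the two non-stationary terms $\tfrac12|t|^{2H}$ and $\tfrac12|s|^{2H}$. Your argument is therefore silent about them. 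The hypothesis of (ii) also does not give stationary increments in the first place.

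Your other input is wrong too.
\begin{itemize}
\item Theorem~\ref{thm:continuous_replacement}(iii) concerns the expected noise term $E[\FF_{G_{\mathrm{Aff}}}(n)]$, not $\FF_{\mathrm{Aff}}(x)$.
\item A multiplier with symbol $|\omega|$ is not dilation invariant. It therefore contradicts the intertwining property $\rho(h)\FF_G\rho(h)^{-1}=\FF_G$, which you yourself correctly invoke.
\item The restrictions of $\rho$ to $L^2_+(\R)$ and $L^2_-(\R)$ are irreducible and inequivalent (Proposition~\ref{prop:affine_dm}(i)). So the commutant is $\C P_+ + \C P_-$, with $P_\pm$ the orthogonal projections onto $L^2_\pm(\R)$. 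That is, the admissible symbols are homogeneous of degree $0$ on each half-line, not of arbitrary degree.
\end{itemize}
A direct computation confirms this. Apply Plancherel in $b$; then $\int_0^\infty|\hat x(a\omega)|^2\,da/a$ is independent of $|\omega|$ on each half-line. This gives $\FF_{\mathrm{Aff}}(x)=c_x^+P_+ + c_x^-P_-$ with $c_x^\pm=\int_0^\infty|\hat x(\pm\omega)|^2\,\omega^{-1}\,d\omega$, up to Fourier normalization. In the Duflo--Moore relation, $C_\rho$ acts on the analyzing vector, here $x$ itself, so it only produces scalars. The paper proves (ii) by a different route, asserting that $\RR$ and $\FF_{G_{\mathrm{Aff}}}$ are both diagonal in the wavelet family. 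That assertion is heuristic and does not supply your missing step either.

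The repair is the mechanism you already found in (iii). There your argument is valid and more direct than the paper's inference from invariance of $\RR$ under the one-parameter subgroup $\{(\tau,\beta\tau)\}$. Your opening plan could not work there anyway: $\rho$ is irreducible, so only scalars commute with all of $\rho(H)$. By the Moyal identity~\eqref{eq:moyal}, $\FF_{H/Z}(x)=\|x\|^2 I$. For admissible real-valued $x$ one has $c_x^+=c_x^-$, so $\FF_{\mathrm{Aff}}(x)$ is a scalar as well.

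But then the commutator vanishes for \emph{every} $\RR$. The chirp structure, the metaplectic shear, and self-similarity are never used, and $\|[\FF_{H/Z}(x),\RR]\|_{HS}=0$ holds equally for stationary and self-similar processes. So $\delta=0$ in (ii) and (iii) holds only in this degenerate sense. Nothing in your proposal supports the ``is the matched group'' clauses. Read as uniqueness of the minimizer, your own computation refutes them, since the Heisenberg--Weyl group always ties at zero.

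Finally, every $\FF_G(x)$ here is a nonzero Fourier multiplier or scalar on $L^2(\R)$, hence not compact, so $\|\FF_G\|_{HS}=\infty$. Only your convention ``vanishing commutator means $\delta=0$'' gives the statement any meaning. Your fallback of compressing to a time- and band-limited subspace would destroy exact commutation in (i), because time-limiting does not commute with Fourier multipliers.
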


\begin{proof}
\textit{Part (i).}  A wide-sense stationary process has covariance $R(t,s) = R(t-s)$, so $\RR$ is a convolution operator.  Convolution operators commute with translations: for any $\tau$, $[\RR, \rho(\tau)]f(t) = \int R(t-s)f(s+\tau)\,ds - \int R(t-\tau-s)f(s)\,ds = 0$ by the substitution $s' = s + \tau$.  Since $\FF_{(\R,+)}$ is itself a convolution operator (Proposition~\ref{prop:translation}(ii)), and convolution operators on $L^2(\R)$ form a commutative algebra, $[\FF_{(\R,+)}, \RR] = 0$ and $\delta = 0$.

\textit{Part (ii).}  Let $x$ be self-similar with Hurst parameter $H$: $R(\lambda t, \lambda s) = \lambda^{2H} R(t,s)$.  In the frequency domain, this means the spectral density satisfies $S_x(\omega) \propto |\omega|^{-(2H+1)}$ (a power-law spectrum).  The wavelet transform diagonalizes such processes: for an admissible wavelet $\psi$, the wavelet coefficient variance is
\begin{equation}
E[|W_\psi x(a,b)|^2] = a^{2H+1} \int |\hat{\psi}(\omega)|^2 S_x(\omega/a)\,d\omega = a^{2H+1} C_H,
\end{equation}
where $C_H$ depends on $H$ and $\psi$ but not on $(a,b)$.  The wavelet coefficients at different scales are therefore uncorrelated (by the Calder\'{o}n reproducing formula applied to the covariance kernel), which means $\RR$ is diagonal in the wavelet basis.  Since $\FF_{G_{\mathrm{Aff}}}$ is also diagonal in the wavelet basis (it is the resolution of identity for the affine group), the operators commute: $[\FF_{G_{\mathrm{Aff}}}, \RR] = 0$, giving $\delta = 0$.

\textit{Part (iii).}  A linear chirp process with instantaneous frequency $f(t) = f_0 + \beta t$ has covariance $R(t,s) = A(t,s)\exp(i\pi\beta(t^2 - s^2))$ where $A$ depends only on $t - s$ (the process is ``stationary in the chirp frame'').  The Heisenberg-Weyl action $\rho(\tau,\nu,0)x(t) = e^{i2\pi\nu t}x(t-\tau)$ preserves this structure: a time shift by $\tau$ and frequency shift by $\nu = \beta\tau$ follows the chirp line, leaving the covariance form invariant.  Formally, the covariance operator satisfies $\rho(\tau, \beta\tau, 0)\RR\rho(\tau, \beta\tau, 0)^* = \RR$ for all $\tau \in \R$.  Since the group-averaged estimator $\FF_{H/Z}$ integrates over all $(\tau,\nu)$ pairs (including those on and off the chirp line), and the Moyal identity guarantees convergence (Proposition~\ref{prop:hw_dm}(iii)), the commutativity $[\FF_{H/Z}, \RR] = 0$ follows from the invariance of $\RR$ under the one-parameter subgroup $\{(\tau, \beta\tau) : \tau \in \R\} \subset H/Z$.
\end{proof}

\begin{remark}[Process Classification Hierarchy]
The three matched-group conditions form a hierarchy of increasing generality.  Stationarity (translation invariance) implies a flat power spectrum; self-similarity (scale invariance) implies a power-law spectrum; chirp structure (time-frequency invariance) implies energy concentrated along a line in the time-frequency plane.  The commutativity residual $\delta$ provides a continuous measure of departure from each symmetry class, enabling data-driven classification of processes whose structure does not fit exactly into any one category.
\end{remark}

Figure~\ref{fig:transform-selection} validates the classification numerically.  Three analytically constructed covariance matrices, namely circulant (stationary), fractional Brownian motion (self-similar), and quadratic-phase (chirp), are tested against three discrete generators: the cyclic shift $P$ (translation), the log-index diagonal $D$ (dilation), and the chirp-conjugated shift $P_\psi = U_\psi^* P U_\psi$ (time-frequency).  In each case, the matched generator achieves the minimum $\delta$, with exact commutativity ($\delta = 0$) for the stationary and chirp cases.

\begin{figure}[ht]
\centering
\includegraphics[width=\textwidth]{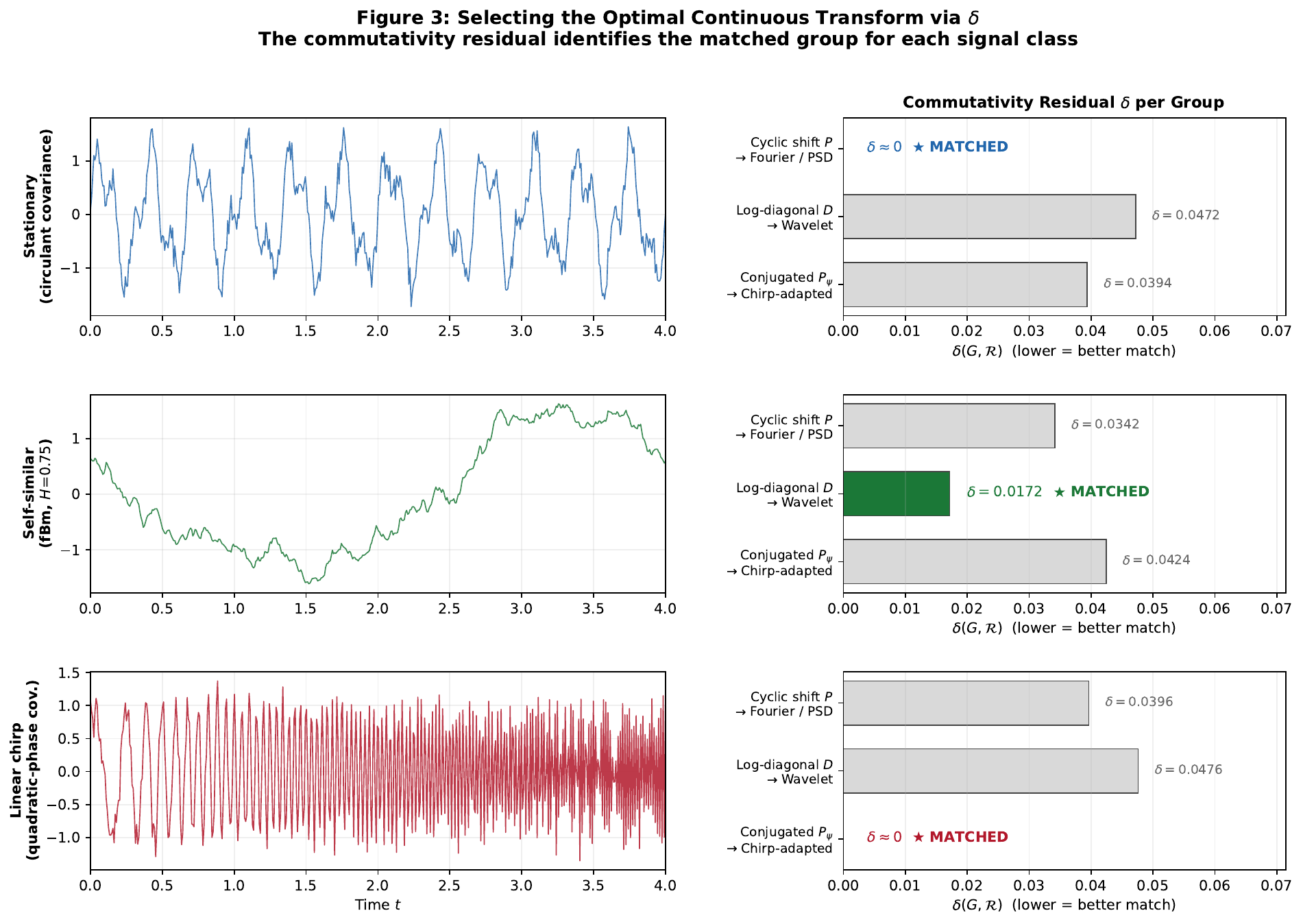}
\caption{Selecting the optimal transform via $\delta$.  Three signal classes are tested against three group generators.  The commutativity residual correctly identifies the matched group in each case: cyclic shift for the stationary signal, log-diagonal for the self-similar signal, and chirp-conjugated shift for the chirp.}
\label{fig:transform-selection}
\end{figure}

\section{The Uncertainty Principle as a Group Commutation Constraint}\label{sec:uncertainty}

We include this section not to reprove a known inequality but to reinterpret it.  The proof of $\Delta\omega\,\Delta t \geq 1/2$ is due to Robertson~\cite{robertson1929} and uses only the Cauchy-Schwarz inequality and the commutator $[d/dt, t\cdot] = I$.  It is a theorem of functional analysis.  It requires no physics.  This has been understood for nearly a century, and we add nothing to the proof.

What we add is the following observation: the uncertainty principle is the statement that two Lie groups cannot simultaneously be matched to the same signal.  This observation appears to be new.  It connects the resolution limits of spectral analysis to the group selection framework developed in this paper, and it identifies the Heisenberg group as the algebraic compromise between two incompatible optimality criteria.

\subsection{Two Complementary Groups}

The Heisenberg group $H$ contains two Abelian subgroups:

The \textbf{translation subgroup} $\{T_\tau\}$, with $T_\tau f(t) = f(t-\tau)$ and infinitesimal generator $A_1 = -id/dt$.  Its eigenfunctions are the complex exponentials $e^{i\omega t}$, and minimizing $\delta((\R,+), \RR)$ produces optimal frequency resolution (i.e., the Fourier transform).

The \textbf{modulation subgroup} $\{M_{\omega_0}\}$, with $M_{\omega_0}f(t) = e^{i\omega_0 t}f(t)$ and generator $A_2 = t$ (multiplication by $t$).  Its eigenfunctions are the Dirac impulses (point masses concentrated at each time instant), and minimizing the commutativity residual for this group produces optimal time resolution.

The generators satisfy $[A_1, A_2] = -iI$, a purely algebraic identity requiring only the product rule $d(tf)/dt = f + tf'$.  By the Robertson inequality~\cite{robertson1929}:
\begin{equation}
\Delta_f A_1 \cdot \Delta_f A_2 \geq \tfrac{1}{2}|\langle [A_1, A_2]f, f\rangle| = \tfrac{1}{2}.
\end{equation}

\subsection{The AD Interpretation (Novel)}

In the language of continuous AD, the uncertainty principle states:

\emph{The translation group and the modulation group are complementary matched groups.  No signal can simultaneously minimize the commutativity residual $\delta$ for both groups, because their generators do not commute.  Minimizing $\delta$ for the translation group (achieving perfect frequency resolution) necessarily maximizes $\delta$ for the modulation group (destroying time localization), and vice versa.}

The Heisenberg group, which contains both as subgroups, provides the joint compromise: the STFT (Gabor transform) achieves simultaneous but imperfect resolution in both domains, with the uncertainty bound $\Delta\omega\,\Delta t \geq 1/2$ as the resolution limit.  By the Stone-von Neumann theorem, the Schr\"{o}dinger representation is the unique irreducible representation of $H$, so the STFT is the unique jointly optimal transform.

This interpretation generalizes beyond time-frequency.  Any pair of self-adjoint operators whose commutator is a scalar produces an analogous uncertainty relation:

\begin{center}
\begin{tabular}{llll}
\hline
\textbf{Operators} & \textbf{Commutator} & \textbf{Uncertainty} & \textbf{Domain} \\
\hline
$-id/dt$, \; $t$ & $-iI$ & $\Delta\omega\,\Delta t \geq 1/2$ & Signal processing \\
$\hat{x}$, \; $\hat{p}$ & $i\hbar I$ & $\Delta x\,\Delta p \geq \hbar/2$ & Quantum mechanics \\
$td/dt$, \; $\log t$ & $I$ & $\Delta\sigma\,\Delta(\log t) \geq 1/2$ & Scale-frequency \\
$\hat{N}$, \; $\hat{\phi}$ & $-iI$ & $\Delta N\,\Delta\phi \geq 1/2$ & Quantum optics \\
\hline
\end{tabular}
\end{center}

Each row is a theorem of functional analysis.  The physical content, where applicable, lies entirely in which physical quantities are identified with which operators.  The quantum-mechanical relation $\Delta x\,\Delta p \geq \hbar/2$ is recovered by the dimensional identification $\hat{p} = \hbar A_1 = -i\hbar\,d/dt$ and $\hat{x} = A_2 = t$; Planck's constant $\hbar$ enters as a unit conversion factor (the de~Broglie relation $p = \hbar\omega$), not as a consequence of any derivation.

\begin{figure}[ht]
\centering
\includegraphics[width=\textwidth]{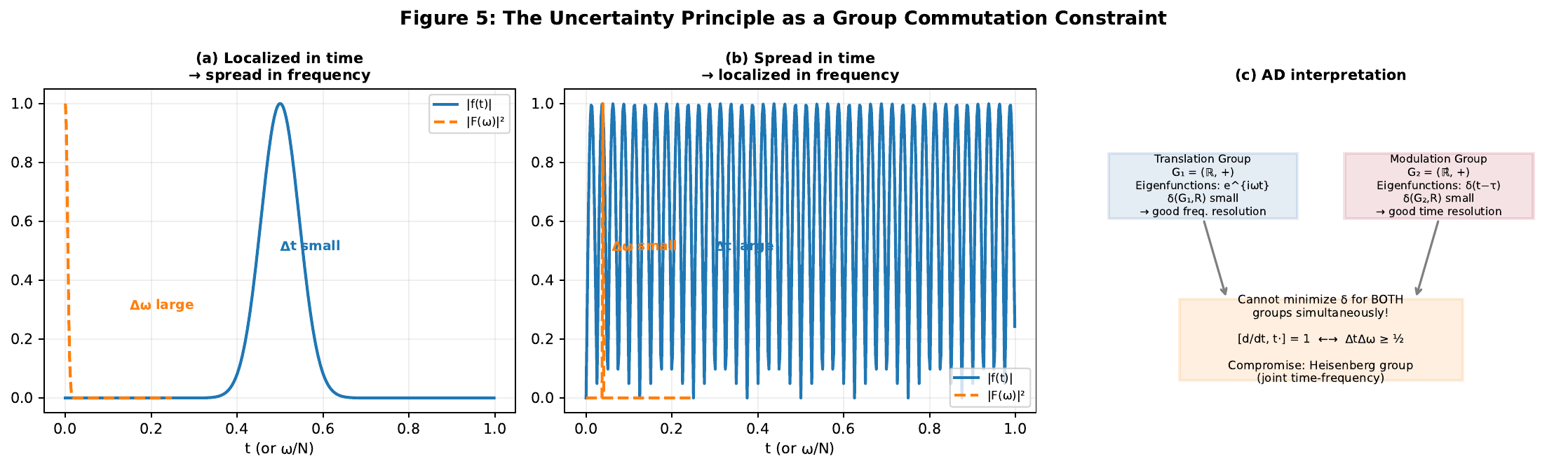}
\caption{The uncertainty principle as a group commutation constraint.  (a)~Time-localized function: well-matched to the modulation group but poorly matched to the translation group.  (b)~Frequency-localized function: the reverse.  (c)~AD interpretation: the generators of translation ($d/dt$) and modulation ($t\cdot$) do not commute; $\delta$ cannot be minimized for both simultaneously.}
\label{fig:uncertainty}
\end{figure}

\section{Discretization Recovery}\label{sec:discretization}

A natural question: if the continuous framework is the ``true'' theory, what is the status of the discrete framework?  Is it an approximation?  A special case?  Something else?

The answer is clean.  The discrete group-averaged estimator is a Riemann sum approximation to the continuous one.  The cyclic group $\Z_M$ is the restriction of the translation group $(\R,+)$ to $M$ equally spaced grid points.  The DFT is a sampled Fourier transform.  As $M \to \infty$, the discrete estimator converges to the continuous one, and the convergence rate is $O(1/M)$ for bandlimited signals.  Every theorem proved in the discrete AD framework is a finite-sample version of a theorem in the continuous framework.

This is not just a theoretical comfort.  It explains why the DFT works so well in practice: it is the best finite approximation to the continuous translation-group estimator.  It also explains where the DFT fails: for signals whose structure is scale-invariant or time-frequency-varying, the cyclic group is the wrong restriction, and a different finite group (conjugated cyclic, dihedral, or product) provides a better approximation.

\begin{theorem}[Discretization Recovery]\label{thm:discretization}
Let $G$ be a Lie group acting on $L^2(\R)$ via a square-integrable representation $\rho$.  Let $x_M = [x(t_1), \ldots, x(t_M)]^T$ be $M$ uniform samples of $x$ at spacing $\Delta t = T/M$.  Let $G_M$ be the finite group obtained by restricting $\rho$ to the sample grid.  Then:
\begin{equation}
\lim_{M \to \infty} \mathbf{F}_{G_M}(x_M) = \FF_G(x)
\end{equation}
in the Hilbert-Schmidt sense, where:
\begin{enumerate}
\item[(i)] $G = (\R, +)$ restricts to $G_M = \Z_M$ (cyclic shifts on the grid).
\item[(ii)] $G = \R^+ \ltimes \R$ restricts to $G_M = $ conjugated cyclic group (dyadic dilation on the grid).
\item[(iii)] $G = \mathrm{HW}$ restricts to $G_M = \Z_M \times \Z_M$ (discrete time-frequency lattice).
\end{enumerate}
The convergence rate is $O(1/M)$ for signals with bounded spectral support in the representation domain of $G$.
\end{theorem}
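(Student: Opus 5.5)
The plan is to embed the discrete estimator into $L^2(\R)$ and read the limit as a Riemann-sum convergence statement. Sampling $x$ on the grid $t_k = k\Delta t$ and interpolating (for instance with the Shannon/sinc kernel, which is exact for bandlimited $x$) gives an isometry-up-to-scaling $J_M:\C^M\to L^2(\R)$. A discrete operator $\mathbf{F}_{G_M}(x_M)$ then corresponds to the finite-rank operator $J_M\,\mathbf{F}_{G_M}(x_M)\,J_M^*$ with kernel on $\R^2$. The Hilbert-Schmidt convergence in the statement is then the $L^2(\R^2)$ convergence of this kernel to the kernel of $\FF_G(x)$. To make the sums comparable with the integrals, $\mathbf{F}_{G_M}$ is normalized with the Haar weights, namely $\Delta t$ per translation step and $\Delta t\,\Delta\nu$ per lattice point, replacing the $1/|G|$ of~\eqref{eq:dad}. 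The argument proceeds case by case. In each case $\FF_G(x)$ is written as $\int_G K_g\,d\mu(g)$ with $K_g = \rho(g)x\otimes(\rho(g)x)^*$, and $\mathbf{F}_{G_M}$ as $\sum_{g\in G_M} w_g K^{(M)}_g$. The error then splits as (a) a quadrature error $\|\sum_g w_g K_g - \int_G K_g\,d\mu\|_{HS}$ plus (b) a sampling/interpolation error $\sum_g w_g\|K_g^{(M)}-K_g\|_{HS}$.

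\textbf{Case (i), translations.} The kernel in Proposition~\ref{prop:translation} and Theorem~\ref{thm:unification}(i) is $R_{xx}(t-s)$, and the discrete estimator is the circulant matrix whose entries are the periodic sample autocorrelation $\Delta t\sum_k x(t_{j+k})\overline{x(t_{l+k})}$. For bandlimited $x$ supported essentially in $[0,T]$, the sampling theorem makes the Riemann sum for $R_{xx}$ exact up to aliasing and wraparound (periodization) terms. Tail/wraparound control bounds these by the energy of $x$ outside the window, and they are handled with Lemma~\ref{lem:integrand_norm}, which makes every summand have the same HS norm $\|x\|^2$. Under the bounded-spectral-support hypothesis this leaves an $O(\Delta t)=O(1/M)$ term from the window-edge discretization. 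Equivalently, on the Fourier side the DFT eigenvalues $|\hat x_M(k)|^2$ converge to samples of $|\hat x(\omega)|^2$, which is Proposition~\ref{prop:translation}(ii).

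\textbf{Case (iii), the Heisenberg-Weyl lattice.} The same argument applies in two variables with $\Z_M\times\Z_M$ and $\Delta\tau\,\Delta\nu = T/M\cdot 1/T$. The absolute convergence given by the Moyal identity~\eqref{eq:moyal} supplies dominated convergence for the HS kernel. Bandlimitation together with time-limitation makes the STFT of $x$ a smooth function on a compact region, so the two-dimensional trapezoid rule yields $O(1/M)$.

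\textbf{Case (ii), the affine group.} This case is the main obstacle, for two reasons. A dyadic dilation group on a uniform grid is not a subgroup in any literal sense, and the Haar weight $a^{-2}$ degenerates at small $a$, where the grid cannot resolve the dilated signal. The plan is to work in the frequency domain, where by Proposition~\ref{prop:affine_dm} the operator is multiplication by $c_\psi/|\omega|$-weighted quantities and dilation becomes $\omega\mapsto a\omega$. Under the logarithmic change of variable $u=\log a$, the dilation becomes a translation in $u$ (this is what "conjugated cyclic" encodes). The scale integral is then a translation-group integral in $u$, and case (i) can be reused, provided spectral support is bounded away from $0$ and $\infty$. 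That restriction is exactly the hypothesis "bounded spectral support in the representation domain." It also truncates the $|\omega|^{-1/2}$ singularity of $C_\rho$ and keeps the admissibility constant~\eqref{eq:calderon} finite uniformly in $M$. The point I expect to be delicate is that a dyadic grid in $a$ gives a quadrature error controlled by the smoothness of $a\mapsto|\hat\psi(a\omega)|^2$ in $\log a$ rather than by $1/M$. To obtain the claimed rate I would let the number of scales grow proportionally to $M$ (scale step $\log 2/\text{octave}$ refined accordingly). If that fails, I would settle for convergence without rate in this case and state the $O(1/M)$ rate for cases (i) and (iii).
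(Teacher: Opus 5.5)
The paper gives no proof of Theorem~\ref{thm:discretization} beyond the Riemann-sum remark that precedes it, and your plan is that remark made explicit. The problem is that the step everything rests on cannot succeed.

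You reduce the claim to $L^2(\R^2)$ convergence of kernels to the kernel of $\FF_G(x)$. But for $x\neq 0$ (admissible, in the affine case) the operator $\FF_G(x)$ is not Hilbert--Schmidt, and not even compact, for any of the three groups. Left invariance of Haar measure gives $\rho(h)\FF_G(x)\rho(h)^*=\FF_G(x)$, and the paper's own identities then pin the limit down:
\begin{itemize}
\item For $(\R,+)$ the kernel is $R_{xx}(t-s)$, and $\iint|R_{xx}(t-s)|^2\,dt\,ds=\infty$; equivalently, a nonzero Fourier multiplier $|\hat x|^2$ is never compact.
\item For $H/Z$ the Moyal identity~\eqref{eq:moyal} gives $\langle\FF_{H/Z}(x)f,f\rangle=\|x\|^2\|f\|^2$, i.e.\ $\FF_{H/Z}(x)=\|x\|^2 I$.
\item For the affine group, the $b$-then-$a$ integration in the proof of Proposition~\ref{prop:affine_dm} gives a positive multiple of $c_x^+P_++c_x^-P_-$. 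Here $P_\pm$ are the projections onto $L^2_\pm(\R)$ and $c_x^\pm=\int_0^\infty|\hat x(\pm\omega)|^2\,d\omega/\omega$. So the $|\omega|$-weights you intend to truncate integrate out entirely.
\end{itemize}
Every finite-rank operator is at infinite HS distance from each of these, for every $M$. The asserted HS convergence therefore cannot hold as stated, and no quadrature estimate can rescue it.

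The discrete side shows the same thing. Let $P$ be the cyclic shift and $E=\mathrm{diag}(e^{2\pi i n/M})$. The discrete Moyal identity $\sum_{k,l}|\langle y,E^lP^kx_M\rangle|^2=M\|x_M\|^2\|y\|^2$ makes the $\Z_M\times\Z_M$ estimator a multiple of $I_M$. With your Haar weights and sinc embedding it is approximately $\|x\|^2$ times a rank-$M$ projection, whose HS norm is about $\|x\|^2\sqrt M\to\infty$. Lemma~\ref{lem:integrand_norm} points the same way. An integrand of constant HS norm over a group of infinite Haar measure has no HS-convergent integral (the estimator exists only weakly), so that lemma cannot be what controls your error terms.

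The rate analysis has independent defects.
\begin{itemize}
\item With $\Delta t=T/M$ and $T$ fixed, the grid never leaves $[0,T]$. So $\Z_M$ approximates the circle group $\R/T\Z$ on $L^2[0,T)$, not $(\R,+)$ on $L^2(\R)$. The wraparound discrepancy (the corner terms $R_{xx}(t-s\mp T)$) and the out-of-window energy of $x$ do not depend on $M$. Bounding them by ``the energy of $x$ outside the window'' therefore gives an $O(1)$ error, not $O(1/M)$.
\item For $x$ bandlimited below the Nyquist rate, the full-grid Riemann sum for $R_{xx}$ is exact by Poisson summation. There is no $O(\Delta t)$ quadrature term for the rate to come from.
\item In case (iii), no nonzero function is both band- and time-limited, so the compact-region trapezoid argument has nothing to apply to.
\item Case (ii) you leave open yourself.
\end{itemize}

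What your method can genuinely deliver is more limited. At fixed $T$, the suitably normalized and embedded circulant converges in HS norm to the $\R/T\Z$ estimator (the periodized autocorrelation operator on $L^2[0,T)$), at rate $O(1/M)$ when the $T$-periodization of $x$ is Lipschitz. For $H/Z$ and the affine group one can at best hope for a weaker (strong- or weak-operator) convergence. That requires $T_M\to\infty$ together with $\Delta t_M\to0$, and comes with no HS rate.
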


\section{The Master Table}\label{sec:table}

We can now assemble the complete picture.  Table~\ref{tab:master} lists every spectral analysis method encountered in this paper and its predecessor~\cite{thornton2026ad_arxiv}, together with the group that generates it, the eigenfunctions it produces, and the signal class for which it is optimal.  The table has a simple message: there is one construction (the group-averaged estimator) and many groups.  The group is the only degree of freedom.  Everything else follows.

\begin{table}[ht]
\centering
\caption{The Algebraic Diversity hierarchy: every spectral analysis method is a group selection.}
\label{tab:master}
\begin{tabular}{lllll}
\hline
\textbf{Group} & \textbf{Action} & \textbf{Transform} & \textbf{Eigenfunctions} & \textbf{Matched signal} \\
\hline
\multicolumn{5}{l}{\textit{Discrete (finite) groups}} \\
$\{e\}$ (trivial) & Identity & Outer product & $\mathbf{x}$ & None (rank 1) \\
$\Z_M$ (cyclic) & Cyclic shift & DFT & $e^{i2\pi kn/M}$ & Periodic \\
$D_M$ (dihedral) & Shift + reflect & DCT/DST & $\cos(\pi kn/M)$ & Symmetric boundary \\
$\Z_2$ & Reversal & Even/odd & $\mathbf{1} \pm \mathbf{J}$ & Bilateral \\
$S_M$ (symmetric) & All permutations & KL transform & Data-adapted & Any (optimal) \\
\hline
\multicolumn{5}{l}{\textit{Continuous (Lie) groups}} \\
$(\R, +)$ & Translation & PSD & $e^{i\omega t}$ & Stationary \\
$\R^+$ (dilation) & Dilation & Mellin & $t^{i\sigma}$ & Scale-invariant \\
$\R^+ \ltimes \R$ (affine) & Translate + dilate & Wavelet & $\psi_{a,b}(t)$ & Self-similar \\
$\mathrm{SO}(2)$ & Rotation & Fourier series & $e^{im\theta}$ & Isotropic on $S^1$ \\
$\mathrm{SO}(3)$ & 3-D rotation & Sph.\ harmonics & $Y_\ell^m(\theta,\phi)$ & Isotropic on $S^2$ \\
$H/Z$ (Heisenberg) & Time-freq shift & Ambiguity/STFT & $g_{\tau,\omega}(t)$ & Chirp / FM \\
\hline
\end{tabular}
\end{table}

\section{Implications}\label{sec:implications}

The theorems are established.  We now discuss what they mean for practice, for computation, and (speculatively) for physics.

\subsection{The Choice of Transform Is a Group Selection Problem}

For 60 years, the choice among spectral analysis methods has been guided by convention, application tradition, and computational convenience.  Engineers use the DFT because the FFT is fast.  Image processors use the DCT because JPEG standardized it.  Radar engineers use the ambiguity function because Woodward introduced it.  The continuous AD framework reveals that each of these choices is an implicit group selection, and that the commutativity residual $\delta$ provides the principled criterion that has been missing.

\subsection{Waveform Diversity in the Continuum}

The eigentensor hierarchy~\cite{thornton2026ad_arxiv} constructs multi-level estimators by applying group averaging to sequences of group-averaged estimates.  In the discrete case, this requires partitioning a signal into pulses and forming the eigentensor from the pulse sequence.  In the continuous framework, the eigentensor is constructed by applying the group action continuously over the waveform, eliminating pulse boundary effects and providing access to sub-pulse structure.

\subsection{When to Use Discrete Eigentensors vs.\ Continuous AD}

A practitioner using the AD framework faces a design choice: should the signal be segmented into discrete pulses and analyzed via the eigentensor hierarchy, or should the continuous group-averaged estimator be applied directly to the waveform?  The answer depends on the signal's structure.

The discrete eigentensor tower is appropriate when the signal naturally consists of repeated, identifiable segments.  Radar returns from successive pulses, ultrasonic echoes from periodic transmissions, and packet-based communication waveforms all have intrinsic pulse boundaries.  In these cases, the Level-1 estimator characterizes each pulse, the Level-2 estimator captures how the characterization changes across pulses, and the hierarchy detects temporal dynamics (target maneuvers, channel fading, structural degradation) that a single-pulse analysis would miss.

The continuous approach is appropriate when the signal is a single extended waveform without natural segmentation points.  A gravitational wave chirp, an acoustic emission from a fracture event, a biomedical recording of a single heartbeat, or a speech utterance are continuous phenomena.  Imposing artificial pulse boundaries would split the signal at arbitrary points, introducing edge effects and discarding cross-boundary structure.  The continuous group-averaged estimator avoids this by integrating the group action over the full waveform.

A simple diagnostic distinguishes the two cases: if the signal has a natural repetition period $T$ and the analysis window contains $L$ repetitions, use the discrete eigentensor with $L$ Level-1 estimates.  If the signal is a single event with duration $T$ and no repetition, use the continuous estimator with the Lie group matched to the signal's symmetry class.

\subsection{Stochastic Process Classification}

The matched group of a stochastic process determines its optimal spectral representation.  This induces a classification of processes by their algebraic symmetry:
\begin{itemize}
\item Stationary processes $\leftrightarrow$ translation group $\leftrightarrow$ Fourier analysis
\item Self-similar processes $\leftrightarrow$ affine group $\leftrightarrow$ wavelet analysis
\item Chirp processes $\leftrightarrow$ Heisenberg-Weyl group $\leftrightarrow$ time-frequency analysis
\item Mixed processes $\leftrightarrow$ product or semidirect product groups $\leftrightarrow$ hybrid transforms
\end{itemize}
The commutativity residual $\delta$ provides a computable distance between a process and each symmetry class, enabling data-driven transform selection.

\subsection{The Double-Commutator GEVP for Discrete Group Selection}\label{sec:gevp}

The continuous framework provides a powerful computational tool for solving the discrete group selection problem.  In the discrete setting, one observes $\mathbf{x} \in \C^M$ and seeks the finite group $G^*$ that minimizes the commutativity residual $\delta(G, \mathbf{R})$.  A brute-force search over all subgroups of $S_M$ is combinatorially intractable.  The key insight is that this discrete optimization can be relaxed to a continuous one over a Lie algebra, solved in closed form, and discretized back.

\textbf{The problem.}  Given a sample covariance $\mathbf{R} = \mathbf{x}\mathbf{x}^H$ and a basis of candidate generators $\{B_1, \ldots, B_d\}$ spanning a subspace of the Lie algebra $\mathfrak{u}(M)$, find the linear combination $\mathbf{A} = \sum_{k=1}^d c_k B_k$ that minimizes the squared commutativity residual:
\begin{equation}\label{eq:gevp_obj}
\delta^2(\mathbf{c}) = \frac{\|[\mathbf{A}, \mathbf{R}]\|_F^2}{\|\mathbf{A}\|_F^2 \cdot \|\mathbf{R}\|_F^2} = \frac{\mathbf{c}^T \mathbf{M}\, \mathbf{c}}{\mathbf{c}^T \mathbf{N}\, \mathbf{c}} \cdot \frac{1}{\|\mathbf{R}\|_F^2},
\end{equation}
where $\mathbf{M}$ and $\mathbf{N}$ are the $d \times d$ \emph{double-commutator matrix} and \emph{Gram matrix}:
\begin{equation}\label{eq:dc_matrices}
M_{ij} = \Tr\!\left(B_i^H [\mathbf{R}, [\mathbf{R}, B_j]]\right), \qquad N_{ij} = \Tr(B_i^H B_j).
\end{equation}

\textbf{Derivation.}  The numerator $\|[\mathbf{A}, \mathbf{R}]\|_F^2$ expands as:
\begin{align}
\|[\mathbf{A}, \mathbf{R}]\|_F^2 &= \Tr\!\left([\mathbf{A}, \mathbf{R}]^H [\mathbf{A}, \mathbf{R}]\right) \notag \\
&= \Tr\!\left((\mathbf{A}\mathbf{R} - \mathbf{R}\mathbf{A})^H(\mathbf{A}\mathbf{R} - \mathbf{R}\mathbf{A})\right) \notag \\
&= \Tr\!\left(\mathbf{A}^H(\mathbf{R}^2\mathbf{A} - \mathbf{R}\mathbf{A}\mathbf{R}) - \mathbf{A}^H\mathbf{R}(\mathbf{A}\mathbf{R} - \mathbf{R}\mathbf{A})\right). \label{eq:expand}
\end{align}
Substituting $\mathbf{A} = \sum c_k B_k$ and collecting terms yields $\sum_{i,j} c_i c_j \Tr(B_i^H [\mathbf{R}, [\mathbf{R}, B_j]])$.  The inner double commutator $[\mathbf{R}, [\mathbf{R}, B_j]] = \mathbf{R}^2 B_j - 2\mathbf{R} B_j \mathbf{R} + B_j \mathbf{R}^2$ is Hermitian when $B_j$ is Hermitian, ensuring $\mathbf{M}$ is real symmetric.  The denominator gives $\|\mathbf{A}\|_F^2 = \mathbf{c}^T \mathbf{N}\, \mathbf{c}$.

\textbf{Solution.}  Minimizing the Rayleigh quotient $\mathbf{c}^T \mathbf{M}\, \mathbf{c} / \mathbf{c}^T \mathbf{N}\, \mathbf{c}$ is the generalized eigenvalue problem:
\begin{equation}\label{eq:gevp}
\mathbf{M}\, \mathbf{c}^* = \lambda_{\min}\, \mathbf{N}\, \mathbf{c}^*.
\end{equation}
The minimum eigenvalue $\lambda_{\min}$ is the squared commutativity residual (up to the constant $\|\mathbf{R}\|_F^{-2}$), and the corresponding eigenvector $\mathbf{c}^*$ identifies the optimal generator $\mathbf{A}^* = \sum c_k^* B_k$.  The finite group generated by $\mathbf{A}^*$ (or its nearest finite-order approximation) is the matched group.

\textbf{Complexity.}  Assembling $\mathbf{M}$ costs $O(d^2 M^2)$ (each entry is a trace of $M \times M$ matrices).  Solving the $d \times d$ GEVP costs $O(d^3)$.  For a generic basis with $d = O(M^2)$, the total cost is $O(M^6)$, polynomial in $M$.  In practice, compact bases with $d = O(M)$ suffice for cyclic, dihedral, and product groups, reducing the cost to $O(M^3)$.

This result solves the blind group matching problem: given a single observation $\mathbf{x}$, the GEVP identifies the optimal finite group without prior knowledge of the signal class.

\subsection{Speculative: The Eigentensor Hierarchy and Renormalization}\label{sec:renormalization}

\emph{The following discussion is speculative and is included to suggest directions for future investigation.  It should not be read as established theory.}

The discrete eigentensor hierarchy applies group averaging at successive levels: Level~1 characterizes individual observations; Level~2 characterizes how characterizations evolve.  In the continuum, this becomes a tower of integral operators at different scales.  If we let $\lambda$ denote a scale parameter and $\FF_G^{(1)}(\lambda)$ the group-averaged estimator at scale $\lambda$, the Level-2 estimator treats the spectral parameters $\hat{c}(\lambda)$ as a function of scale and applies a second group $G'$ to characterize how the spectral structure evolves across scales.

In quantum field theory, the evolution of correlation structure across energy scales is the renormalization group (RG) flow.  The structural parallel is suggestive: the AD scale parameter $\lambda$ maps to the RG cutoff $\Lambda$; the Level-1 estimator maps to the propagator; the matched group at each scale maps to the gauge symmetry; and a change of matched group across scales maps to spontaneous symmetry breaking (Figure~\ref{fig:renormalization}).  Whether this parallel extends to a mathematical equivalence is an open question that we do not address here.

\begin{figure}[ht]
\centering
\includegraphics[width=0.85\textwidth]{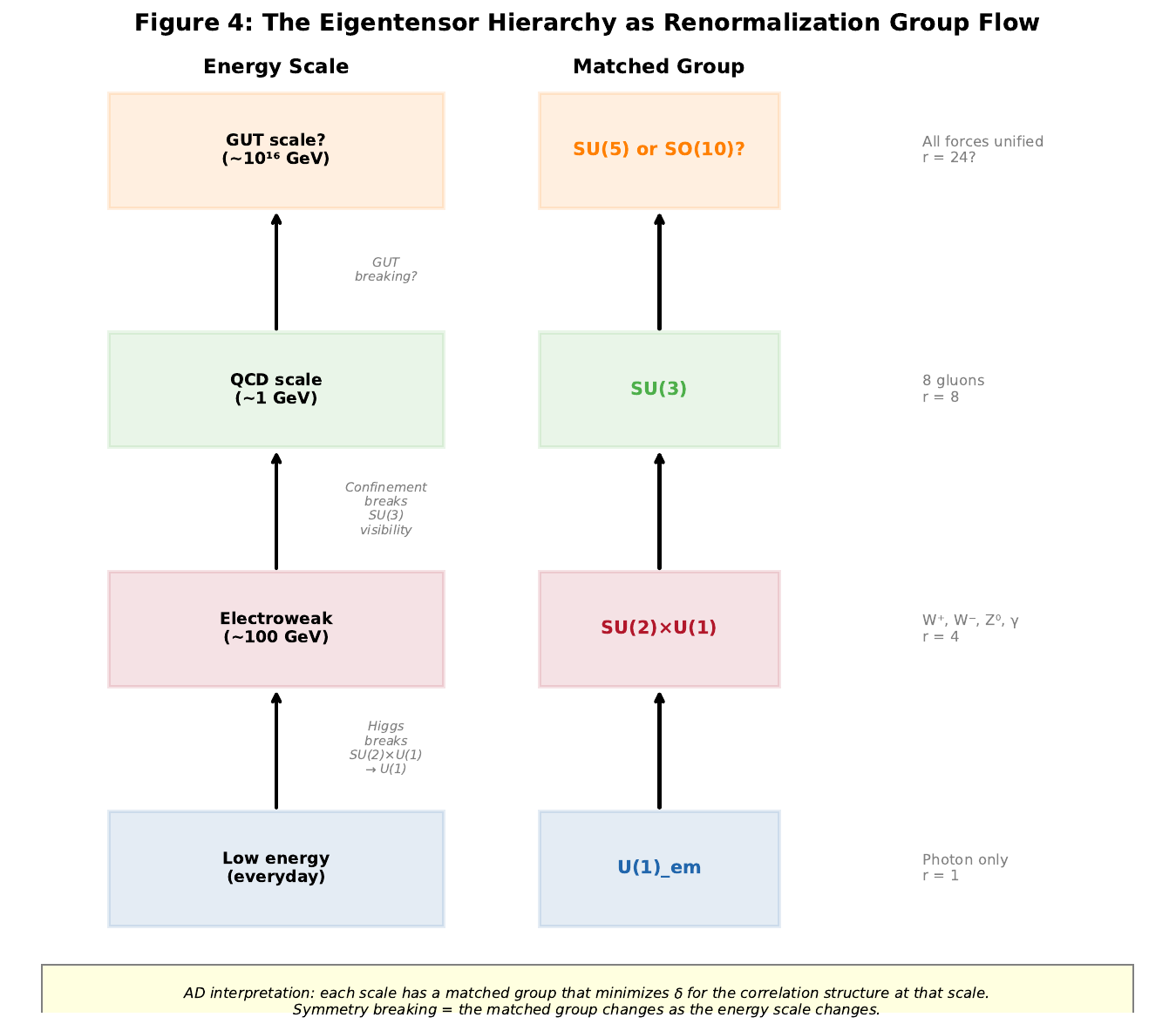}
\caption{(Speculative.)  The eigentensor hierarchy as renormalization group flow.  At each energy scale, the correlation structure has a matched group.  Symmetry breaking corresponds to a scale-dependent change in the matched group.}
\label{fig:renormalization}
\end{figure}

\section{Conclusion}

We have extended the algebraic diversity framework from finite groups to Lie groups, establishing that the group-averaged estimator unifies the major branches of continuous signal analysis: Fourier (translation group), wavelet (affine group), time-frequency (Heisenberg-Weyl group), and spherical harmonic (rotation groups) analysis are all instances of a single construction, distinguished only by the choice of group.

The Duflo-Moore framework provides the rigorous foundation, revealing that the three principal groups exhibit genuinely different convergence mechanisms: the affine group requires an admissible wavelet (Calder\'{o}n condition), the Heisenberg-Weyl group admits any nonzero window (Moyal identity), and the translation group converges by Cauchy-Schwarz without square-integrability.  The non-unimodularity of the affine group produces the frequency-dependent noise floor in wavelet analysis, a fact that the AD framework traces to the Duflo-Moore operator $C_\rho$.

The continuous commutativity residual provides a principled criterion for selecting among Lie groups, validated numerically on three signal classes (Figure~\ref{fig:transform-selection}).  The uncertainty principle emerges as a group commutation constraint: the generators of the translation and modulation subgroups do not commute, so $\delta$ cannot be minimized for both simultaneously.  The discretization recovery theorem establishes that all discrete AD results are sampling approximations to the continuous theory.

The master table (Table~\ref{tab:master}) summarizes the complete hierarchy from trivial groups to Lie groups, revealing that the choice of spectral analysis method has always been a group selection problem, one that the engineering community solved case by case over six decades, and that the AD framework solves uniformly.

\bibliographystyle{IEEEtran}

\end{document}